\documentclass[11pt]{article}

\newif\ifnotes\notesfalse
\newif\ifsub\subfalse


\title{Asymptotically Good Quantum Codes with \\ Addressable and Transversal Non-Clifford Gates

}

\ifsub
\author{\textit{Anonymous Submission to FOCS 2025}}
\else 
\author{Zhiyang He\thanks{Email: \texttt{szhe@mit.edu}. Supported by the MIT Department of Mathematics and the NSF Graduate Research Fellowship Program under Grant No. 2141064.}\\MIT
\and Vinod Vaikuntanathan\thanks{Email: \texttt{vinodv@mit.edu}. Supported in part by NSF grant CNS-2154149 and a Simons Investigator Award.}\\MIT
\and Adam Wills\thanks{Email: \texttt{a\_wills@mit.edu}. Affiliated with the MIT Center for Theoretical Physics - a Leinweber Institute. Supported by the MIT Department of Physics. This preprint is assigned number MIT-CTP/5859.}\\MIT
\and Rachel Yun Zhang\thanks{Email: \texttt{rachelyz@mit.edu}. Supported by NSF Graduate Research Fellowship 2141064. Supported in part by NSF grant CNS-2154149.}\\MIT}
\fi 

\date{\today}

\usepackage[breaklinks]{hyperref}
\usepackage[margin=1in]{geometry}
\usepackage{mathptmx}
\usepackage{amssymb}
\usepackage{amsmath}
\usepackage{amsthm}
\usepackage{amsfonts}
\usepackage[T1]{fontenc}
\usepackage{bm}
\usepackage{color}
\usepackage{comment}
\usepackage[capitalize]{cleveref}
\usepackage[dvipsnames]{xcolor}
\usepackage{float}
\usepackage{todonotes}
\usepackage{asymptote}
\usepackage{mdframed}
\usepackage[most]{tcolorbox}
\usepackage{hyperref}
\usepackage[shortlabels]{enumitem}
\usepackage{framed}
\usepackage{mdframed}
\usepackage{scrextend}
\usepackage{multirow}
\usepackage{ifthen}
\usepackage{bbm}
\usepackage{frcursive}
\usepackage{thm-restate}
\usepackage{booktabs}
\usepackage{array}
\usepackage{graphicx}
\usepackage{physics}
\usepackage{mathtools}



\newcommand{\rnote}[1]{\ifnotes $\ll$\textsf{\color{red} Rachel: { #1}}$\gg$ \fi}
\newcommand{\snote}[1]{\ifnotes $\ll$\textsf{\color{blue} Sunny: { #1}}$\gg$ \fi}

\definecolor{denim}{rgb}{0.08, 0.38, 0.74}
\definecolor{classicrose}{rgb}{0.98, 0.8, 0.91}
\definecolor{darkpastelblue}{rgb}{0.47, 0.62, 0.8}
\definecolor{dogwoodrose}{rgb}{0.84, 0.09, 0.41}

\usepackage{hyperref}
\hypersetup{
    colorlinks=true,
    linkcolor=dogwoodrose,
    filecolor=dogwoodrose,
    citecolor=denim,
    urlcolor=denim,
}
\usepackage[hyperpageref]{backref}

\newtheorem{theorem}{Theorem}[section]
\newtheorem{lemma}[theorem]{Lemma}
\newtheorem*{lemma*}{Lemma}
\newtheorem{claim}[theorem]{Claim}

\newtheorem{proposition}[theorem]{Proposition}

\newtheorem{assumption}[theorem]{Assumption}
\newtheorem{definition}[theorem]{Definition}

\theoremstyle{definition}
\newtheorem{remark}[theorem]{Remark}

\Crefname{theorem}{Theorem}{Theorems}
\Crefname{claim}{Claim}{Claims}
\Crefname{lemma}{Lemma}{Lemmas}
\Crefname{proposition}{Proposition}{Propositions}
\Crefname{corollary}{Corollary}{Corollaries}
\Crefname{definition}{Definition}{Definitions}


\renewcommand{\leq}{\leqslant}
\renewcommand{\ge}{\geqslant}
\renewcommand{\geq}{\geqslant}

\newcommand{\CCZ}{\mathsf{CCZ}}

\newcommand{\CSS}{\text{CSS}}

\newcommand{\AG}{\mathsf{AG}}

\newcommand{\bbN}{\mathbb{N}}

\newcommand{\bbF}{\mathbb{F}}

\newcommand{\cC}{\mathcal{C}}
\newcommand{\cD}{\mathcal{D}}

\newcommand{\cL}{\mathcal{L}}

\newcommand{\cN}{\mathcal{N}}

\newcommand{\cQ}{\mathcal{Q}}

\newcommand{\cT}{\mathcal{T}}

\newcommand{\rr}{r}
\newcommand{\Gal}{\text{Gal}}

\makeatletter
\newcommand{\customlabel}[2]{%
   \protected@write \@auxout {}{\string \newlabel {#1}{{#2}{\thepage}{#2}{#1}{}} }%
   \hypertarget{#1}{#2}
}
\makeatother



\newcounter{datacounter}


\newcounter{casenum}

\newcommand{\case}[2]{
    \refstepcounter{casenum}
    \ifthenelse{\equal{\value{casenum}}{0}}{
    \vskip.5\baselineskip\par\noindent
    }{}
    \noindent {\it Case \arabic{casenum}:} {\it #1}
    \vskip0.1\baselineskip
    \begin{addmargin}[1.5em]{1em}
    #2
    \end{addmargin}
}

\newcounter{subcasenum}

\newcounter{casenumb}

\newcounter{subcasenumb}

\usepackage{cite}

\newcommand{\bal}{\boldsymbol{\alpha}}

\newcommand{\balpha}{\boldsymbol{\alpha}}
\newcommand{\bbeta}{\boldsymbol{\beta}}

\newcommand{\qcode}{\mathcal{Q}}


\usepackage[normalem]{ulem}

\newcommand{\FF}{\mathbb{F}}

\newcommand*{\ol}{\overline}  
\usepackage{braket}
\usepackage[normalem]{ulem}
\DeclareMathAlphabet{\mathcal}{OMS}{cmsy}{m}{n}


\DeclareMathOperator{\Aut}{Aut}

\begin{document}

\sloppy
\maketitle
\ifsub
\else 
\renewcommand{\thefootnote}{\fnsymbol{footnote}}
\setcounter{footnote}{5}
\newcommand\blfootnote[1]{
    \begingroup
    \renewcommand\thefootnote{}\footnote{#1}
    \addtocounter{footnote}{-1}
    \endgroup
}
\blfootnote{$^{\P}$The authors are listed in alphabetical order.}
\setcounter{footnote}{0}
\renewcommand{\thefootnote}{\arabic{footnote}}
\fi 

\begin{abstract}
\snote{alternative titles:
asymptotically good quantum codes with addressable and transversal non-Clifford gates.}

Constructing quantum codes with good parameters and useful transversal gates is a central problem in quantum error correction. In this paper, we continue our work in~\cite{he2025quantum} and construct the first family of asymptotically good quantum codes (over qubits) supporting transversally addressable non-Clifford gates. More precisely, given any three logical qubits across one, two, or three codeblocks, the logical $\mathsf{CCZ}$ gate can be executed on those three logical qubits via a depth-one physical circuit of $\mathsf{CCZ}$ gates.
This construction is based on the transitive, iso-orthogonal algebraic geometry codes constructed by Stichtenoth~\cite{stichtenoth2006transitive}. 
This improves upon our construction from~\cite{he2025quantum}, which also supports transversally addressable $\mathsf{CCZ}$ gates and has inverse-polylogarithmic rate and relative distance.

\end{abstract}
\thispagestyle{empty}
\newpage

\newgeometry{left=2cm,right = 2cm,top=1.75cm,bottom=1.75cm}
\tableofcontents
\pagenumbering{roman}
\restoregeometry
\newpage
\pagenumbering{arabic}

\section{Introduction}


Since the discovery and formulation of quantum error correction~\cite{shor1995scheme,steane1996multiple,calderbank1996good,gottesman1997stabilizer}, quantum codes with good parameters and useful transveral gates has been a topic of major interest in the field. In this direction, constructing \textit{high rate} quantum codes with transversal and \textit{addressable} quantum gates is a particularly difficult problem. The former property ensures a very space-efficient error correction overhead, whereas the latter property enables a particularly fault-tolerant and time-efficient implementation of quantum logic.
We will omit further discussion on this property and its motivation here, referring the reader to Section~1 of~\cite{he2025quantum} for a complete discussion. 
In this paper, which is a follow-up to~\cite{he2025quantum}, we construct the first family of asymptotically good quantum codes with transversal and addressable non-Clifford gates, as detailed in the following theorem.


\begin{theorem}\label{thm:main_result_1}
    There exists a family of asymptotically good quantum CSS codes over qubits with parameters
    \begin{equation}
        \left[\left[n,\Theta(n), \Theta(n)\right]\right]_2,
    \end{equation}
    supporting a transversally addressable non-Clifford gate. Specifically, any three logical qubits across one, two, or three blocks of the code may be addressed with a logical $\CCZ$ gate via a depth-one circuit of physical $\CCZ$ gates. 
\end{theorem}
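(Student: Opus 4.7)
The plan is to instantiate the addressability framework of \cite{he2025quantum} with the transitive, iso-orthogonal algebraic geometry codes of Stichtenoth \cite{stichtenoth2006transitive} as the underlying classical ingredient, in place of the inverse-polylog-rate (Reed--Muller-style) codes that previously capped the parameters at sub-constant rate.

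First I would isolate the abstract requirements that the framework of \cite{he2025quantum} imposes on the underlying classical code $\mathcal{C}$: it must be (a) iso-orthogonal, so that $\mathcal{C} \subseteq \mathcal{C}^{\perp}$ (up to a diagonal scaling) in order to seed a CSS code; (b) closed under triple coordinatewise products, i.e.\ $\mathcal{C} \ast \mathcal{C} \ast \mathcal{C}$ lying inside the dual of the $X$-check code, so that a depth-one physical $\CCZ$ layer preserves the codespace and implements a logical $\CCZ$; (c) asymptotically good, with rate and relative distance both $\Theta(1)$; and (d) equipped with a transitive (or suitably rich) automorphism group acting on coordinates, so that any designated triple of logical qubits can be routed to a canonical triple of physical positions. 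Stichtenoth's construction supplies precisely such a family: one-point AG codes $\mathcal{C} = C_L(D, m P_\infty)$ on a tower attaining the Tsfasman--Vl\u{a}du\c{t}--Zink bound, for which \cite{stichtenoth2006transitive} guarantees both a transitive automorphism action on the rational evaluation places and the iso-orthogonality $\mathcal{C} \subseteq \mathcal{C}^{\perp}$ for an appropriate choice of pole order $m$.

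Next I would verify parameters and transversality. From Riemann--Roch one has $\dim \mathcal{C} \ge m - g + 1$ and $d(\mathcal{C}) \ge n - m$, while the tower gives $g/n \to 1/(\sqrt{q}-1)$; choosing $m = \alpha n$ for a small enough constant $\alpha$ keeps $\mathcal{C}$ at constant rate and constant relative distance, while leaving $3m$ safely below the distance threshold of the enveloping code $C_L(D, 3m P_\infty)$. The product law $L(m P_\infty) \cdot L(m P_\infty) \cdot L(m P_\infty) \subseteq L(3m P_\infty)$ then delivers the triorthogonality making a depth-one transversal $\CCZ$ across three codeblocks realize a componentwise logical $\CCZ$. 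Since Stichtenoth's codes naturally live over $\bbF_{q^2}$, the descent to genuine qubit codes is handled by the subfield-restriction and inner-code-concatenation step of \cite{he2025quantum}, which preserves each of (a)--(d) up to constant factors.

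Finally, for addressability I would exploit transitivity: to execute $\CCZ$ on any prescribed triple of logical qubits, whether they lie in three, two, or one codeblock, conjugate by automorphisms routing those targets to a fixed canonical triple of coordinates, apply the transversal $\CCZ$ layer there, and conjugate back; the two-block and single-block cases reduce to the three-block case via the block-collapsing gadgets of \cite{he2025quantum}. The main obstacle, and the reason Stichtenoth's particular construction is essential rather than a generic AG family, is in simultaneously securing (a), (c), and (d): generic AG codes attaining the TVZ bound do not come with transitive automorphism groups, while highly symmetric codes such as Reed--Muller cannot attain constant rate and distance together. The remaining bookkeeping is to verify that the composed automorphism-then-transversal layer introduces no logical side effect beyond the addressed $\CCZ$, which follows from the choice of logical basis inherited from the framework of \cite{he2025quantum}.
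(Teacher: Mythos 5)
Your high-level strategy — instantiate the punctured-CSS framework of \cite{he2025quantum} with Stichtenoth's transitive, iso-orthogonal AG codes to pass from polylog loss to constant rate and distance, then concatenate down to qubits — is exactly what the paper does, and the abstract requirements (a)--(d) you isolate are the right ones. However, there are two concrete technical gaps. First, you model Stichtenoth's codes as one-point codes $C_L(D, mP_\infty)$, but this is not what the construction gives nor can it be: the evaluation divisor must be invariant under the full Galois group $\Gal(E_i/E_0)$ for the code to be transitive, and there is no single Galois-fixed rational place to serve as $P_\infty$. Instead one must use the two-parameter family $C^{(i)}_{a,b} = C_{\mathcal{L}}(D^{(i)}, aA^{(i)} + bB^{(i)})$, where $A^{(i)}$ and $B^{(i)}$ are the Galois-invariant divisors lying over $(w=0)$ and $(w=\infty)$, because both the transitivity and the iso-orthogonality statement you need (Proposition~4.7 of~\cite{stichtenoth2006transitive}: $C^{(i)\perp}_{a,b} = \underline{u}\cdot C^{(i)}_{a_i-a,\,b_i-b}$) are proved for this family, not for generic one-point divisors. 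Relatedly, the exponent budget is a four-fold product (three logical codeword evaluations plus the logical-position indicator function $\tilde{g}^A$ appearing in the $\CCZ$ exponents), which is why the paper fixes $a = \lfloor a_i/4\rfloor$, $b = \lfloor b_i/4\rfloor$ rather than the ``$3m$'' budget you allot.

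Second, the routing mechanism you propose — conjugate each block by an automorphism to bring the targets to a canonical triple, apply a fixed transversal $\CCZ$ layer, conjugate back — only makes sense inter-block, where each block can be permuted independently. For the intra-block case you cannot simultaneously apply three different permutations to one copy of the state, and mere transitivity does not give you the triple-transitivity that would route $(\beta_A,\beta_B,\beta_C)$ jointly to a canonical triple. The paper instead applies the product $\prod_k \CCZ^{\gamma\,\underline{x}_A^{-1}\underline{y}_k\,g^A_k}[\alpha_k,\varphi_{AB}(\alpha_k),\varphi_{AC}(\alpha_k)]$ directly on the single block; since each physical qudit then occurs in at most three of these gates, a greedy coloring schedules them in depth seven, and the depth is reduced to one by duplicating qudits a constant number of times (a modification of the code, not a gadget in~\cite{he2025quantum}). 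The ``block-collapsing gadgets'' you invoke do not appear there and would, as ordinarily conceived, introduce non-$\CCZ$ operations into the transversal layer, breaking the claim that the logical $\CCZ$ is implemented by a depth-one circuit of physical $\CCZ$ gates.
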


This construction follows similar ideas to the construction of Theorem~1.1 in~\cite{he2025quantum}. 
In that work, we used punctured Reed-Solomon codes as the component classical codes in a CSS construction.  
The multiplication property of the Reed-Solomon code (see Fact~2.11 in~\cite{he2025quantum}) and its automorphisms by affine shifts enable the tranversal implementations of addressable non-Clifford gates. 
A caveat of this construction, however, is that Reed-Solomon codes are defined over finite fields $\mathbb{F}_q$ of growing order, which means the constructed CSS codes are defined over qudits of increasing dimension. 
In particular, the dimension $q$ must be at least $n$, the number of physical qudits in the code. 
With established techniques (see Section~6 of~\cite{he2025quantum} and also~\cite{nguyen2024goodbinaryquantumcodes,golowich2024asymptoticallygoodquantumcodes}), we can perform a code concatenation procedure to construct a qubit code with the same transversal addressability of the $\CCZ$ gate.
However, this concatenation incurs a loss of parameters: while our codes are asymptotically good over qudits of growing dimension (see Theorem~3.1 of~\cite{he2025quantum}), the resulting qubit code loses a factor \(\text{polylog}(q)\) during the concatenation procedure (see Theorem~1.1 of~\cite{he2025quantum}). Therefore, since $q \sim n$, this means that the resulting qubit code had parameters
\begin{equation}
    \left[\left[ n, \Theta\left(\frac{n}{\text{polylog}(n)}\right), \Theta\left(\frac{n}{\text{polylog}(n)}\right)\right]\right]_2.
\end{equation}

In this follow-up work, we use classical algebraic geometry codes to build quantum codes with the same properties, except where the qudit dimension is constant; $q = \Theta(1)$. This means that the resulting qubit codes after the concatenation procedure is also asymptotically good, thus establishing Theorem~\ref{thm:main_result_1}.
In particular, we employ the transitive, iso-orthogonal algebraic geometry codes due to Stichtenoth~\cite{stichtenoth2006transitive}, where the transitive action of the Galois automorphisms on the rational places of the codes enables the addressability of non-Clifford gates.
This answers one of the open problems from the discussion in Section~1.2 of~\cite{he2025quantum}. 
We remark that the other open problems in the list remain relevant and interesting.

\section{Preliminaries}\label{sec:prelim}
For the sake of brevity, we will mostly refer the reader to existing preliminary material. In particular, for the background on finite fields, non-Clifford gates and algebraic codes, we refer the reader to the preliminary material in~\cite{he2025quantum}. However, we state the most essential information here for convenience.

Consider $\mathbb{F}_q$, the finite field of order $q$, where $q$ is a power of two.\footnote{In this work, we are interested in the case when $q$ is a power of two because we ultimately want codes over qubits, although the results may be easily generalised to qudits of any dimension.} A (Galois) qudit of dimension $q$ has computational basis states denoted $\left\{\ket{\eta}\right\}_{\eta \in \mathbb{F}_q}$. For each $\gamma \in \mathbb{F}_q$, there is a $\mathsf{CCZ}$ gate, denoted $\mathsf{CCZ}^\gamma$, acting on three qudits as
\begin{equation}
    \mathsf{CCZ}^\gamma\ket{\eta_1}\ket{\eta_2}\ket{\eta_3} = (-1)^{\tr(\gamma\;\eta_1\eta_2\eta_3)}\ket{\eta_1}\ket{\eta_2}\ket{\eta_3},
\end{equation}
where $\tr:\mathbb{F}_q \to \mathbb{F}_2$ is an $\mathbb{F}_2$-linear map known as the trace~\cite{mullen2013handbook}. We use $\overline{\mathsf{CCZ}^\gamma}$ to denote the logical $\mathsf{CCZ}^\gamma$ gate acting on three logical qudits, where the code will be obvious from context. In such a case, the notation $\overline{\mathsf{CCZ}^\gamma[A,B,C]}$ denotes the logical gate $\mathsf{CCZ}^\gamma$ acting on three logical qudits denoted $A$, $B$ and $C$. We may also wish to consider acting with the logical gate on qudits in different blocks of the same code (a so-called ``inter-block gate''). Given codeblocks labelled $1$, $2$ and $3$, and qudits in the respective codeblocks denoted $A, B$ and $C$, the logical $\mathsf{CCZ}^\gamma$ gate acting on these three qudits is denoted $\overline{\mathsf{CCZ}_{123}^\gamma[A,B,C]}$.

In the present work, we will use some properties of specialised algebraic geometry codes. For the initial material on algebraic geometry codes, we refer the reader to Section~2.3 of~\cite{wills2024constant}. The present work requires more in-depth material than this, though, and we state the necessary further facts in the following section.

\subsection{Further Preliminaries on Algebraic Geometry Codes}\label{sec:further_AG_prelims}

Our construction will rely heavily on the work~\cite{stichtenoth2006transitive} due to Stichtenoth. As is common in such constructions, function fields with desirable properties, including having many rational places, are obtained by defining \textit{towers} of function fields. A \textit{tower} of function fields is a sequence $\mathcal{E} = (E_0 \subseteq E_1 \subseteq E_2 \subseteq \ldots )$ where every $E_i$ is an algebraic function field of one variable over $\mathbb{F}_q$, and each extension $E_{i+1}/E_i$ is a field extension of finite degree.
For our purposes, we will be concerned with the tower $\mathcal{E}$ constructed in~\cite{stichtenoth2006transitive}. Here, each $E_i$ is a function field defined over $\mathbb{F}_q$, where $q=r^2$ is a prime power. In addition, for the purposes of this paper, we let $q$ be a power of two, because we ultimately want to construct codes over qubits via the qudit-to-qubit transformations described in~\cite{he2025quantum} (although this may readily be generalised to qudits of any dimension). 

Now, it will be important to be able to describe the places of $E_i$ in terms of places of function fields lower down the tower. For this, we give the following important definition.
\begin{definition}[From Definition 3.1.3 of~\cite{stichtenoth2009algebraic}]\label{def:place_extension}
    Consider function fields $F$ and $F'$ over $\mathbb{F}_q$, where $F \subseteq F'$ is a field extension of finite degree. Given a place $P$ of $F$, and another place $P'$ of $F'$, we say that $P'$ lies over $P$ if $P' \supseteq P$. We also that $P'$ is an extension of $P$ or that $P$ lies under $P'$, and denote the situation as $P'|P$.
\end{definition}
It then becomes natural to ask, in the notation of Definition~\ref{def:place_extension}, if we are given a place $P$ of $F$, how many places $P'$ of $F'$ may lie over $P$? An upper bound follows.
\begin{proposition}[From Corollary 3.1.12 of~\cite{stichtenoth2009algebraic}]
    With the notation of Definition~\ref{def:place_extension}, we have that the number of places $P'$ of $F'$ that may lie over $P$ is at most the degree of the field extension, $[F':F]$. 
\end{proposition}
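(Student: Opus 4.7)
The plan is to deduce this bound from the fundamental equality for ramification indices and relative degrees in a finite extension of algebraic function fields. To each extension $P' \mid P$ one associates two positive integers: the \emph{ramification index} $e(P'|P)$, defined by the relation $v_{P'}|_F = e(P'|P) \cdot v_P$ where $v_P, v_{P'}$ denote the normalized discrete valuations, and the \emph{relative degree} $f(P'|P) := [\mathcal{O}_{P'}/P' : \mathcal{O}_P/P]$, the degree of the residue field extension. Both quantities are well-defined positive integers straight from the definitions.

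First, I would recall (or cite, from Stichtenoth Theorem~3.1.11) the identity
\[
\sum_{P' \mid P} e(P'|P)\, f(P'|P) \;=\; [F':F],
\]
where the sum ranges over all places $P'$ of $F'$ lying over $P$. This is the central input and the only nontrivial ingredient; its proof, which we do not reproduce, uses the weak approximation theorem together with an analysis of the integral closure of the valuation ring $\mathcal{O}_P$ in $F'$.

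Second, since $e(P'|P) \geq 1$ and $f(P'|P) \geq 1$ for every extension $P' \mid P$, each summand in the identity above is at least $1$. Therefore the number of summands, which is precisely the number of places $P'$ of $F'$ lying over $P$, is bounded above by the total sum $[F':F]$, yielding the claim.

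There is no real obstacle here: the proposition is an immediate corollary of the fundamental equality, and the bulk of the work has been pushed into that prerequisite result. The only thing one must be careful about is to verify that the two positivity facts $e(P'|P), f(P'|P) \geq 1$ hold for \emph{every} $P' \mid P$ (not merely for unramified or tame ones), which is clear from the definitions since a valuation restricted to a subfield is still $\mathbb{Z}$-valued and a residue field extension is always of degree at least one.
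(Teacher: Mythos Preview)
Your argument is correct and is precisely the standard derivation: the bound follows immediately from the fundamental equality $\sum_{P'|P} e(P'|P)f(P'|P) = [F':F]$ together with $e,f \geq 1$. The paper itself does not supply a proof but simply cites Corollary~3.1.12 of Stichtenoth's book, where this exact argument appears, so your proposal matches the intended reasoning.
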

Given the above, the following definition is very natural.
\begin{definition}[From Definition 3.1.13 of~\cite{stichtenoth2009algebraic}]
    With the notation of Definition~\ref{def:place_extension}, we say that the place $P$ splits completely in the extension $F'/F$ if there are exactly $[F':F]$ distinct places $P'$ of $F'$ that lie over $P$.
\end{definition}
Now, the transversal addressability of the $\CCZ$ gates in our quantum codes will rely on the transitivity of the codes in~\cite{stichtenoth2006transitive} (see Theorem~\ref{thm:Galois_transitivity}). To present these, we give the following definitions, which are standard concepts in abstract algebra, in particular in Galois theory.
\begin{definition}
    Consider a field extension $M/L$ of finite degree. The group of automorphisms of the extension is the group of field automorphisms of $M$ that fix $L$ pointwise, that is,
    \begin{equation}
        \Aut(M/L) \coloneq \{\sigma:M \to M\mid \sigma \text{ is a field isomorphism and }\sigma(a) = a \text{ for all } a \in L\}.
    \end{equation}
    It turns out that $|\Aut(M/L)| \leq [M:L]$. In the case that $|\Aut(M/L)| = [M:L]$, the extension $M/L$ is known as a \textit{Galois extension}. In this case, we denote $\Gal(M/L) \coloneq \Aut(M/L)$, and call it the \textit{Galois group} of the extension.
\end{definition}
Galois extensions will be particularly important for us, as will start to become clear shortly. To set ourselves up for this, we consider the following lemma.
\begin{lemma}[From Lemma 3.5.2 of~\cite{stichtenoth2009algebraic}]\label{lemma:place_group_action}
    Given function fields $F'$ and $F$ over $\mathbb{F}_q$, for which $F \subseteq F'$, let $P$ be a place of $F$ and $P'$ a place of $F'$. Considering some $\sigma \in \Aut(F'/F)$, we have that
    \begin{equation}
        \sigma(P') \coloneq \{\sigma(z): z \in P'\}
    \end{equation}
    is a place of $F'$. The valuation of a function $f \in F'$ at the place $\sigma(P')$ is given by
    \begin{equation}
        \nu_{\sigma(P')}(f) = \nu_{P'}(\sigma^{-1}(f)).
    \end{equation}Moreover, if $P'$ lies over $P$, then $\sigma(P')$ also lies over $P$.
\end{lemma}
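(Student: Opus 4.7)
The plan is to work through the three claims by exploiting the fact that an automorphism $\sigma \in \Aut(F'/F)$ is a field isomorphism of $F'$, so it must carry valuation-theoretic structures on $F'$ to valuation-theoretic structures on $F'$. Throughout, I will use the characterisation of a place $P'$ as the maximal ideal of a valuation ring $\mathcal{O}_{P'} \subseteq F'$, together with the associated discrete valuation $\nu_{P'}: F' \to \bbZ \cup \{\infty\}$.

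First I would show that $\sigma(P')$ is a place of $F'$. Because $\sigma$ is a field automorphism, the image $\sigma(\mathcal{O}_{P'})$ is a subring of $F'$, and since $a \in \mathcal{O}_{P'}$ or $a^{-1} \in \mathcal{O}_{P'}$ for every nonzero $a \in F'$, applying $\sigma$ (and using that $\sigma$ commutes with taking inverses) gives the same property for $\sigma(\mathcal{O}_{P'})$; hence $\sigma(\mathcal{O}_{P'})$ is a valuation ring of $F'$. Its unique maximal ideal is the image $\sigma(P')$, since ring isomorphisms preserve maximal ideals. By the standard equivalence between places of $F'$ and valuation rings of $F'/\bbF_q$, this identifies $\sigma(P')$ as a place.

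Next I would verify the valuation formula. Define $w: F' \to \bbZ \cup \{\infty\}$ by $w(f) \coloneq \nu_{P'}(\sigma^{-1}(f))$. Since $\sigma^{-1}$ is a field automorphism and $\nu_{P'}$ is a discrete valuation, $w$ inherits all the valuation axioms (additivity on products, the ultrametric inequality, surjectivity onto $\bbZ$ when restricted to $F' \setminus \{0\}$). A direct check shows that $\{f : w(f) \geq 0\} = \sigma(\mathcal{O}_{P'})$ and $\{f : w(f) > 0\} = \sigma(P')$, so $w$ is the valuation associated with the place $\sigma(P')$. By uniqueness of the normalised discrete valuation attached to a place of an algebraic function field, $w = \nu_{\sigma(P')}$, which is exactly the formula claimed.

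Finally, suppose $P' | P$, i.e., $P \subseteq P'$. For any $z \in P$, since $\sigma$ fixes $F$ pointwise and $P \subseteq F$, we have $\sigma(z) = z$; and since $z \in P'$, we get $z = \sigma(z) \in \sigma(P')$. Therefore $P \subseteq \sigma(P')$, so $\sigma(P')$ lies over $P$. The main conceptual point is the interplay between automorphisms and valuations in the first two parts; however, this step is standard once one adopts the valuation-ring viewpoint of places, so I do not expect any substantive obstacle---the main care required is simply to record that $\sigma^{-1}$ also belongs to $\Aut(F'/F)$ and preserves the normalisation of the valuation.
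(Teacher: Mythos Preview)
Your argument is correct. The paper does not actually supply its own proof of this lemma; it is quoted verbatim as ``From Lemma 3.5.2 of~\cite{stichtenoth2009algebraic}'' and used as background. What you have written is essentially the standard textbook verification (transport a valuation ring along a field automorphism, check that the pushed-forward valuation is normalised, and use pointwise fixing of $F$ for the last claim), which is precisely the route Stichtenoth takes in the cited reference. One tiny point you left implicit is that $\sigma(\mathcal{O}_{P'})$ is a valuation ring \emph{of $F'/\mathbb{F}_q$} in the sense required, i.e.\ it properly contains $\mathbb{F}_q$ and is properly contained in $F'$; this follows immediately since $\sigma$ fixes $\mathbb{F}_q \subseteq F$ pointwise and is a bijection on $F'$.
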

This lemma is telling us that the automorphism group of the field extension induces a group action on the set of places of $F'$ that lie over a given place of $F$. The importance of Galois extensions is seen in the following theorem, which tells us that if the extension $F'/F$ is Galois, this action is in fact transitive.
\begin{theorem}[From Theorem 3.7.1 of~\cite{stichtenoth2009algebraic}]\label{thm:Galois_transitivity}
    Consider function fields $F$ and $F'$ over $\mathbb{F}_q$, where $F'/F$ is a Galois extension. Further, consider places $P_1$ and $P_2$ of $F'$ which both lie over a place $P$ of $F$. Then, there exists some $\sigma \in \Gal(F'/F)$ for which $P_2 = \sigma(P_1)$. Alternatively stated, the Galois group of the extension acts transitively on the set of extensions of the place $P$.
\end{theorem}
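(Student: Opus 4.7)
The plan is to argue by contradiction using the norm map together with the strong approximation theorem for function fields. Suppose $P_2 \notin \mathcal{O}_1 := \{\sigma(P_1) : \sigma \in \Gal(F'/F)\}$, and set $\mathcal{O}_2 := \{\sigma(P_2) : \sigma \in \Gal(F'/F)\}$. By Lemma~\ref{lemma:place_group_action} both orbits consist of places of $F'$ lying over $P$; moreover they are disjoint under the contradiction hypothesis, since a common element $\sigma(P_1) = \tau(P_2)$ would yield $P_2 = \tau^{-1}\sigma(P_1) \in \mathcal{O}_1$.

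I would then apply the strong approximation theorem to the finite disjoint union $\mathcal{O}_1 \sqcup \mathcal{O}_2$ to produce an element $z \in F'$ satisfying $\nu_Q(z) \geq 1$ for all $Q \in \mathcal{O}_1$ and $\nu_Q(z-1) \geq 1$ for all $Q \in \mathcal{O}_2$; the second condition forces $\nu_Q(z) = 0$ on $\mathcal{O}_2$. The key object is then the norm $N := \prod_{\sigma \in \Gal(F'/F)} \sigma(z)$, which lies in $F$ because it is fixed pointwise by the Galois group.

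The next step is a double computation of $\nu_{P_1}(N)$ and $\nu_{P_2}(N)$ via Lemma~\ref{lemma:place_group_action}, which after rearrangement gives $\nu_{P_1}(\sigma(z)) = \nu_{\sigma^{-1}(P_1)}(z)$. One obtains
\begin{equation}
\nu_{P_1}(N) \;=\; \sum_{\sigma \in \Gal(F'/F)} \nu_{\sigma^{-1}(P_1)}(z) \;>\; 0,
\end{equation}
since each $\sigma^{-1}(P_1) \in \mathcal{O}_1$ contributes a nonnegative term and the identity $\sigma = \mathrm{id}$ contributes at least $1$. An identical rearrangement gives $\nu_{P_2}(N) = \sum_\sigma \nu_{\sigma^{-1}(P_2)}(z) = 0$ because $\nu_Q(z) = 0$ on all of $\mathcal{O}_2$. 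The contradiction now falls out: since $N \in F$ and both $P_1, P_2$ lie over $P$, the standard relation $\nu_{P_i}(N) = e(P_i\mid P)\cdot \nu_P(N)$ with positive ramification index $e(P_i\mid P)$ forces the two valuations to have the same sign, contradicting one being positive and one being zero.

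The main obstacle is the approximation step: it is the only nonalgebraic ingredient, and it is what quietly guarantees that the disjoint orbits $\mathcal{O}_1$ and $\mathcal{O}_2$ can be separated by a single function. Everything else is careful bookkeeping of how $\Gal(F'/F)$ permutes valuations, which is precisely what Lemma~\ref{lemma:place_group_action} encodes, and a one-line appeal to the fact that the valuation at any extension of $P$ is a positive integer multiple of $\nu_P$ on elements of $F$.
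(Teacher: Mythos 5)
The paper does not actually prove this statement; it cites it as Theorem~3.7.1 of Stichtenoth's textbook. Your argument is, however, essentially the standard proof of that theorem in the cited reference: separate the two (assumed disjoint) orbits by an approximating element $z$, push the norm $N = \prod_\sigma \sigma(z)$ down to $F$, and derive a contradiction from the relation $\nu_{P_i}(N) = e(P_i\mid P)\,\nu_P(N)$. The bookkeeping with Lemma~\ref{lemma:place_group_action} is correct, $N \neq 0$ because $z$ is a unit at every place of $\mathcal{O}_2$, and $N$ is Galois-fixed hence lies in $F$. One small remark: you do not need \emph{strong} approximation here --- the weak (independence-of-valuations) approximation theorem already produces $z$ with $\nu_Q(z) \geq 1$ on the finitely many places of $\mathcal{O}_1$ and $\nu_Q(z-1) \geq 1$ on those of $\mathcal{O}_2$, since you place no constraint on the behaviour of $z$ outside $\mathcal{O}_1 \sqcup \mathcal{O}_2$, and the two valuation sums you compute only ever sample places inside the two orbits. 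Also, every term in the sum for $\nu_{P_1}(N)$ is in fact $\geq 1$, not merely nonnegative, which gives the slightly cleaner bound $\nu_{P_1}(N) \geq |\Gal(F'/F)|$. Neither point affects the correctness of the argument.
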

The tower of function fields $\mathcal{E}$ constructed by Stichtenoth in~\cite{stichtenoth2006transitive} has the special property that all extensions $E_i/E_0$ are Galois extensions (see Theorem 1.7 in that paper). The function field has some further useful structure, which may be presented by writing the following (see Section 2 in~\cite{stichtenoth2006transitive}):
\begin{equation}
    E_0 = \mathbb{F}_q(z) \subseteq \mathbb{F}_q(w) \subseteq F_0 \coloneq \mathbb{F}_q(x_0) \subseteq E_1 \subseteq E_2 \subseteq \ldots \text{, where } z = w^{r-1} \text{ and } w = x_0^r+x_0.
\end{equation}
We find that $E_0$ is exactly the rational function field over $\mathbb{F}_q$ in the variable $z$. However, between $E_0$ and the next level of the tower, $E_1$, there are two additional rational function fields in the variables $w$ and $x_0$, where $z=w^{r-1}$, and $w = x_0^r+x_0$. For convenience, we have also defined $F_0 = \mathbb{F}_q(x_0)$.

Stichtenoth builds algebraic geometry codes using places of function fields $E_i$ that lie over a particular place of $E_0$, and we will do something similar. Therefore, it will be important to have some information about the places of rational function fields.
\begin{theorem}[From Section 1.2 of~\cite{stichtenoth2009algebraic}]
    Let $\mathbb{F}_q(x)$ be the rational function field in the variable $x$ over $\mathbb{F}_q$. The rational places of $\mathbb{F}_q(x)$ are in natural one-to-one correspondence with the set $\mathbb{F}_q\cup\{\infty\}$. In particular, this function field has exactly $q+1$ rational places.
\end{theorem}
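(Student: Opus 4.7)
The plan is to construct an explicit bijection between $\bbF_q \cup \{\infty\}$ and the set of rational places of $\bbF_q(x)$, and then count. For each $\alpha \in \bbF_q$, I would introduce the place $P_\alpha$ whose valuation ring is the localization $\bbF_q[x]_{(x-\alpha)}$, i.e.\ the rational functions regular at $\alpha$; evaluation-at-$\alpha$ identifies its residue field with $\bbF_q$, so $P_\alpha$ has degree one. Separately, the ``place at infinity'' $P_\infty$ is defined by the valuation ring $\{f/g : \deg f \le \deg g\}$, with residue field $\bbF_q$ via the leading-coefficient map. These candidates are pairwise distinct: the function $x - \alpha$ has valuation $+1$ at $P_\alpha$, valuation $0$ at $P_\beta$ for $\beta \ne \alpha$, and valuation $-1$ at $P_\infty$, so no two of the corresponding valuation rings coincide.

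The main step is surjectivity: every rational place arises this way. Given a place $P$ with residue field $\bbF_q$, I would split on the sign of $\nu_P(x)$. If $\nu_P(x) \ge 0$, then $x \in \cO_P$ and its image in the residue field is some $\alpha \in \bbF_q$; hence $x - \alpha \in \mathfrak{m}_P$, and combined with $\bbF_q[x]_{(x-\alpha)} \subseteq \cO_P$, the maximality of valuation rings inside $\bbF_q(x)$ forces $\cO_P = \bbF_q[x]_{(x-\alpha)}$ and so $P = P_\alpha$. If instead $\nu_P(x) < 0$, then $\nu_P(1/x) > 0$, and the same argument applied to the subring $\bbF_q[1/x]$ with uniformizer $1/x$ identifies $P$ with $P_\infty$. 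I expect this surjectivity step to be the main obstacle: executing it cleanly requires the classification of valuation rings in the PID $\bbF_q[x]$, together with a change-of-variable $x \leftrightarrow 1/x$ to handle the point at infinity symmetrically.

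Combining distinctness with surjectivity yields a bijection $\bbF_q \cup \{\infty\} \leftrightarrow \{\text{rational places of }\bbF_q(x)\}$, and the cardinality of the left-hand side is $q+1$, as claimed.
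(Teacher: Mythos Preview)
The paper does not prove this statement at all; it is quoted as a standard fact from Section~1.2 of Stichtenoth's textbook and used without argument. Your sketch is correct and is essentially the classical proof found in that reference: split on whether $x \in \cO_P$, in the first case identify $\mathfrak{m}_P \cap \bbF_q[x]$ with a linear prime $(x-\alpha)$ and use maximality of discrete valuation rings among proper subrings to force $\cO_P = \bbF_q[x]_{(x-\alpha)}$, and in the second case run the same argument with $1/x$ to recover $P_\infty$. There is nothing in the present paper to compare against.
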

Following~\cite{stichtenoth2006transitive}, we employ the following notation for the rational places of a rational function field $\mathbb{F}_q(x)$.
\begin{definition}
    The $q+1$ rational places of $\mathbb{F}_q(x)$ are denoted $(x=\alpha)$, where $\alpha \in \mathbb{F}_q$, and $(x=\infty)$.
\end{definition}
We are now in a position to describe Stichtenoth's construction of transitive codes in~\cite{stichtenoth2006transitive}. Referring to Theorem 1.7 in that paper, he considers the rational place $(z=1)$ of $E_0 = \mathbb{F}_q(z)$, and shows that this place splits completely in the extension $E_i/E_0$ for any $i \in \mathbb{N}$. He also shows that the degree of each extension is
\begin{equation}
    N^{(i)}\coloneq [E_i:E_0] = (r-1)r^i2^{t(i)},
\end{equation}
where $t(i)$ is a non-negative integer,\footnote{Note that in the statement of Theorem 1.7 of~\cite{stichtenoth2006transitive}, $p$ denotes the character of the field $\mathbb{F}_q$, but we are only interested in finite fields with characteristic $p=2$, leading to our statement.} which means that $E_i$ has exactly $(r-1)r^i2^{t(i)}$ places lying over the place $(z=1)$ of $E_0$ and, moreover, these places are all rational.\footnote{This latter statement is technically redundant; it turns out that if a rational place $P$ splits completely in a field extension of finite degree that all the places lying over $P$ must themselves be rational; see Section 3.1 of~\cite{stichtenoth2006transitive}.} Stichtenoth then constructs algebraic geometry codes by evaluating the functions in a Riemann-Roch space at these rational places that lie over the place $(z=1)$ of $E_0$. Because the extension $E_i/E_0$ is Galois, the group $\Gal(E_i/E_0)$ acts transitively on this set of places. Stichtenoth used this fact to construct classical codes with asymptotically good parameters which are \textit{transitive}, meaning that the group of automorphisms of the code acts transitively on the code's coordinates.\footnote{Given a code $C \subseteq \mathbb{F}_q^N$, a permutation $\sigma \in S_N$ is said to be an automorphism of the code if $(c_1, \ldots, c_N) \in C \implies (c_{\sigma(1)}, \ldots, c_{\sigma(N)}) \in C$. The set of automorphisms of the code form the automorphism group of the code.} 

The reasoning for this goes as follows. We may consider the divisor $D^{(i)}$ of $E_i$ which is formed as the sum of all the places of $E_i$ lying over the place $(z=1)$ in $E_0$; we denote these places as $(P_j)_{j=1}^{N^{(i)}}$. Naturally, we have $\deg(D^{(i)}) = N^{(i)}$. We may also consider some divisor $H^{(i)}$ of $E_i$ and the algebraic geometry code obtained by evaluating all functions in the Riemann-Roch space defined by $H^{(i)}$ at the rational places $P_j$:
\begin{equation}
    C \coloneq C_{\mathcal{L}}(D^{(i)},H^{(i)}) = \{(f(P_1), \ldots, f(P_{N^{(i)}}):f \in \mathcal{L}(H^{(i)})\}.
\end{equation}
We can attempt to define automorphisms of $C$ by taking any $\sigma \in \Gal(E_i/E_0)$ and letting
\begin{equation}
    \sigma(f(P_1), \ldots, f(P_{N^{(i)}})) \coloneq \left(f(\sigma(P_1)), \ldots, f(\sigma(P_{N^{(i)}}))\right).
\end{equation}
If $H^{(i)}$ is chosen to be invariant under $\Gal(E_i/E_0)$, meaning that $\sigma\left(H^{(i)}\right) = H^{(i)}$ for all $\sigma \in \Gal(E_i/E_0)$,\footnote{The definition of $\sigma (H^{(i)} )$ is the natural one. If $H^{(i)} = \sum_P\alpha_PP$, then $\sigma (H^{(i)} ) = \sum_P\alpha_P\sigma(P)$.} then one can show that the above defines an automorphism of the code $C$. Indeed, this may be established by first showing that if $\sigma(H^{(i)}) = H^{(i)}$, then $f \in \mathcal{L}(H^{(i)}) \implies \sigma(f) \in \mathcal{L}(H^{(i)})$ for all $\sigma \in \Gal(E_i/E_0)$. Moreover, the fact that $\Gal(E_i/E_0)$ acts transitively on the places in $E_i$ lying over the place $(z=1)$ of $E_0$ makes the code $C$ transitive.

\section{Transversally Addressable \texorpdfstring{$\mathsf{CCZ}$}{} Gates}
In this section, we will describe asymptotically good quantum codes over qudits of fixed dimension $q   =2^t = \Theta(1)$ with transversally addressable $\CCZ$ gates. 
When combined with the qudit-to-qubit transformations described in Section~6 of~\cite{he2025quantum} (which adapted methods from~\cite{golowich2024asymptoticallygoodquantumcodes,nguyen2024goodbinaryquantumcodes}), this will yield our main result, Theorem~\ref{thm:main_result_1}.
Here, it will suffice to show the following.

\begin{theorem}\label{thm:technical_AG_main}Let $q=r^2$ be a fixed power of two, where $r \geq 8$. There exists an asymptotically good family of qudit $\CSS$ codes $\mathcal{Q}_{\AG}$ for qudits of dimension $q$ with rate at least $\frac{1}{r(r-1)}$ and relative distance at least $\frac{1}{4}-\frac{3}{2}\frac{1}{r-1}-\frac{1}{r(r-1)}$. Given a member of the family with $k$ logical qudits, let $A,B,C \in [k]$ label any three logical qudits, and let $\gamma \in \mathbb{F}_q$. Then,
\begin{enumerate}
    \item On a single code block of $\mathcal{Q}_{\AG}$, $\overline{\CCZ^\gamma[A,B,C]}$ may be implemented by a depth-7 circuit of physical $\CCZ$ gates;
    \item On three distinct code blocks of $\mathcal{Q}_{\AG}$, $\overline{\CCZ^\gamma_{123}[A,B,C]}$ may be implemented by a depth-1 circuit of physical $\CCZ$ gates.
\end{enumerate}
\end{theorem}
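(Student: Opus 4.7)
My plan is to mirror the Reed-Solomon based CSS construction from Section~3 of~\cite{he2025quantum}, replacing the component classical codes by Stichtenoth's transitive algebraic geometry codes. Concretely, I would fix a level $E_i$ of the tower $\mathcal{E}$ with $N = N^{(i)}$ rational places $P_1,\ldots,P_N$ lying above $(z=1)$, let $g$ be the genus of $E_i$, and choose a $\Gal(E_i/E_0)$-invariant divisor $H$ supported away from the $P_j$'s. Setting $\cC_{N,\Lambda}\coloneq C_{\mathcal{L}}(D^{(i)},H)$ with $\Lambda = \deg(H)$, these codes inherit from~\cite{stichtenoth2006transitive} (i) a transitive coordinate-permutation automorphism group, via Theorem~\ref{thm:Galois_transitivity} applied to the Galois extension $E_i/E_0$; (ii) iso-orthogonality $\cC_{N,\Lambda}^{\perp}\cong \cC_{N,\,N-\Lambda+2g-2}$ up to a fixed diagonal rescaling; and (iii) the Goppa multiplication property $\cC_{N,\Lambda_1}\star \cC_{N,\Lambda_2}\subseteq \cC_{N,\Lambda_1+\Lambda_2}$ whenever the right-hand degree stays below $N$. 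I would then define $\qcode_{\AG}$ as the CSS code with $C_X = \cC_{N,\Lambda_X}$ and $C_Z = \cC_{N,\Lambda_Z}$, choosing $\Lambda_X,\Lambda_Z$ so that the dual containment $C_Z^{\perp}\subseteq C_X$ and the triple-product containment $C_X\star C_X\star C_X\subseteq C_Z^{\perp}$ both hold; via (ii) and (iii) this reduces to the single degree inequality $3\Lambda_X \leq N - \Lambda_Z - (2g-2)$.

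The rate and relative distance in the theorem then follow from routine parameter bookkeeping. Taking $\Lambda_X \approx \Lambda_Z \approx N/4$ and using the fact that Stichtenoth's tower attains the Drinfeld-Vladut limit $g/N \to 1/(r-1)$, the CSS logical dimension $k = \dim C_X + \dim C_Z - N$ and the Goppa distance bound $d \geq N - \max(\Lambda_X,\,\Lambda_Z - (2g-2))$ yield the stated numerical expressions, with the $\tfrac{3}{2(r-1)}$ and $\tfrac{1}{r(r-1)}$ correction terms absorbing the genus drop and the boundary slack in (iii).

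For the inter-block gate in item~2, I would apply physical $\CCZ^{\beta}$ coordinatewise across the three codeblocks for an appropriate $\beta \in \mathbb{F}_q$ depending on $\gamma$ and a fixed choice of logical-information basis. The triple-product containment guarantees that this preserves the codespace, and the trace-pairing identity worked out in Section~3 of~\cite{he2025quantum} identifies the induced logical action with $\overline{\CCZ_{123}^{\gamma}[A,B,C]}$. To address an arbitrary triple of logical qudits, I invoke Theorem~\ref{thm:Galois_transitivity}: because $H$ is $\Gal(E_i/E_0)$-invariant, every $\sigma \in \Gal(E_i/E_0)$ induces a coordinate-permutation automorphism of $\cC_{N,\Lambda}$, and transitivity on the places above $(z=1)$ supplies automorphisms mapping the canonical information coordinate to any prescribed logical position. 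Conjugating the transversal $\CCZ$ layer by these free permutations gives depth one.

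The main obstacle is item~1, the single-block implementation. Applying transversal $\CCZ^{\beta}$ inside one block induces on the logical space not only the desired $\overline{\CCZ^{\gamma}[A,B,C]}$ but also lower-degree ``diagonal'' Clifford contributions coming from squared and cubed codeword products, corresponding to the coincidences $A{=}B$, $A{=}C$, $B{=}C$, and $A{=}B{=}C$. Following~\cite{he2025quantum}, the plan is to cancel these by additional transversal $\CCZ$ layers: verifying that the degree-$2$ and degree-$3$ star products of $C_X$ still fit inside $C_Z^{\perp}$ (a direct consequence of (iii) applied with smaller arguments) shows that the compensating CS- and $S$-type logical gates are themselves realisable as transversal $\CCZ$ layers on $\qcode_{\AG}$. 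Bundling these into at most six compensating layers plus the original one yields the claimed depth-$7$ circuit, completing the theorem.
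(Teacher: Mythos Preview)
Your high-level ingredients (Stichtenoth's tower, Galois-invariance of the divisor, the multiplication property, iso-orthogonality) are the right ones, but the mechanism you describe for addressing and for the depth bound does not match what actually makes the construction work, and as written it does not give a single targeted logical $\CCZ$.

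First, the CSS code is not the plain $\CSS(C_X,C_Z)$ on all $N$ places that you write down. Following Section~3 of~\cite{he2025quantum}, the paper \emph{punctures}: it fixes one rational place $(x_0=\eta')$ of $F_0$, lets $\bbeta$ be the $k=N/(r(r-1))$ places of $E_i$ above it and $\bal$ the remaining $n=N-k$ places, writes the generator of $C_{a,b}^{(i)}$ in partially systematic form $\tilde G=\bigl(\begin{smallmatrix}I_k & G_1\\ 0 & G_0\end{smallmatrix}\bigr)$, and sets $\qcode_{\AG}=\CSS(X,\mathrm{rowsp}(G_0);Z,\mathrm{rowsp}(G)^\perp)$ on the $n$ physical coordinates $\bal$. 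The addressability then comes from two pieces you do not mention: (i) the weight $g^A_k=\tilde g^A(\alpha_k)$ attached to each physical $\CCZ$ (this is the ``addressing function'', equal to $\delta_{AA'}$ on the $\bbeta$ side), and (ii) automorphisms $\varphi_{AB},\varphi_{AC}$ drawn from the \emph{subgroup} $\Gal(E_i/F_0)$, not $\Gal(E_i/E_0)$. The subgroup is essential because it acts transitively on $\bbeta$ while leaving the partition $\bbeta\sqcup\bal$ intact; your proposal of conjugating a plain transversal layer by elements of $\Gal(E_i/E_0)$ moves all logical coordinates simultaneously and does not single out one triple $(A,B,C)$.

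Second, your account of the depth-$7$ bound is incorrect. There are no Clifford byproducts and no compensating layers. The intra-block implementation is the single product
\[
\prod_{k=1}^{n}\CCZ^{\gamma\,\underline{x}_A^{-1}\underline{y}_k\,g^A_k}\bigl[\alpha_k,\varphi_{AB}(\alpha_k),\varphi_{AC}(\alpha_k)\bigr],
\]
which already realises $\overline{\CCZ^\gamma[A,B,C]}$ exactly, via Claim~\ref{claim:AG_code_four_mult} and the systematic form of $\tilde G$. The depth $7$ is purely combinatorial: each physical qudit $\alpha_k$ is touched by at most three of these gates (once as first argument, once as $\varphi_{AB}^{-1}$-image, once as $\varphi_{AC}^{-1}$-image), so the conflict graph has maximum degree $6$ and a greedy colouring gives seven layers. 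For the inter-block case the three arguments lie in distinct blocks and never collide, giving depth $1$ immediately.
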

\begin{remark}
    On two distinct code blocks of $\mathcal{Q}_{\AG}$, $\overline{\CCZ_{112}^\gamma[A,B,C]}$ may be implemented by a depth-$3$ circuit of physical $\CCZ$ gates. We do not show this explicitly as it is similar to the other two cases.
\end{remark}

\begin{remark}
    The implementation depth of each operation may be trivially reduced to one, i.e., a strongly transversal implementation, by duplicating qudits a constant number of times. This would produce another asymptotically good code, albeit with worse rate and relative distance.
\end{remark}
\begin{remark}
    We only consider the $\CCZ$ gate for brevity, although the result may be easily generalised to arbitrary diagonal gates (on a constant number of qudits) with entries $\pm 1$ on the diagonal, as is discussed in~\cite{he2025quantum}.
\end{remark}
\begin{remark}
    Under the qudit-to-qubit transformations described in~\cite{he2025quantum} (which in turn follow from~\cite{golowich2024asymptoticallygoodquantumcodes,nguyen2024goodbinaryquantumcodes}), we can use Theorem~\ref{thm:technical_AG_main} to construct an asymptotically good qubit code with addressable inter-block and intra-block addressable $\CCZ$ gates with strongly transversal (i.e. depth-$1$) implementations, thus proving Theorem~\ref{thm:main_result_1}. 
    This is possible because the field size is a constant here $(q = \Theta(1))$, in contrast to the result built from Reed-Solomon-based codes in~\cite{he2025quantum}. There, the growing qudit dimension leads to a qubit code which is only asymptotically good up to a polylogarithmic factor.
    For the sake of brevity, we choose not to restate the transformations in this paper and refer readers to Section~6 of~\cite{he2025quantum}.
\end{remark}
\begin{remark}
    Here, we have considered transversally addressing any \textit{single} logical $\mathsf{CCZ}$ gate on one, two, or three logical qubits. There is a stronger notion of addressability, as is considered in~\cite{guyot2025addressability}, for example, where one may ask about the possibility of being able to transversally address an arbitrary \textit{product} of $\mathsf{CCZ}$ gates on the logical qubits. In the Reed-Solomon-based construction of~\cite{he2025quantum}, we found that certain products of $\mathsf{CCZ}$ gates are transversally addressable, but not all are. One may ask the same question about the codes constructed here, where further information on the Galois group of the relevant field extension may reveal this possibility in the case of general algebraic geometry codes. We comment, however, that such a strong form of transversal addressability need not necessarily exist, and~\cite{guyot2025addressability} proves certain no-go theorems to this effect.
\end{remark}

\subsection{The Quantum Code} \label{sec:AG_qcode}

Recalling the preliminary material of Section~\ref{sec:further_AG_prelims}, especially the tower of function fields $\mathcal{E} = (E_0 \subseteq E_1 \subseteq E_2 \subseteq \ldots )$ considered in~\cite{stichtenoth2006transitive}, we are ready to describe our quantum code. Throuhgout, we fix some power of two, $q=r^2$, where $r \geq 8$. 
\subsubsection{The Underlying Algebraic Geometry Code} Section 4 of~\cite{stichtenoth2006transitive} considers classical algebraic geometry codes
\begin{equation}
    C_{a,b}^{(i)} \coloneq C_{\mathcal{L}}(D^{(i)}, aA^{(i)}+bB^{(i)}),
\end{equation}
where we recall that $D^{(i)}$ is the sum of the $N^{(i)}$ (rational) places of $E_i$ lying over the place $(z=1)$ in $E_0$. In addition, $A^{(i)}$ and $B^{(i)}$ are divisors of the function field $E_i$, which satisfy (see Theorem 1.7 in~\cite{stichtenoth2006transitive})
\begin{equation}\label{eq:principal_divisor}
    (w)^{E_i} = e_0^{(i)}\cdot A^{(i)} - e_\infty^{(i)}\cdot B^{(i)}.
\end{equation}
Here, $(w)^{E_i}$ is the principal divisor of $w$ in the function field $E_i$, where we recall that $E_0 = \mathbb{F}_q(z)$ and $z = w^{r-1}$. Moreover, $e_0^{(i)}$ and $e_\infty^{(i)}$ are positive integers satisfying
\begin{align}
    e_0^{(i)} &= r^{i-1}\cdot 2^{r(i)}\\
    e_\infty^{(i)} &= r^i\cdot 2^{s(i)},
\end{align}
for some non-negative integers $r(i)$ and $s(i)$.\footnote{For completeness, we remark that $e_0^{(i)}$ and $e_\infty^{(i)}$ are the \textit{ramification indices} of the places $(w=0)$ and $(w=\infty)$, respectively, in the extension $E_i/\mathbb{F}_q(w)$, although this terminology is not important to us.} In addition, in the definition $C_{a,b}^{(i)}$, $a$ and $b$ are integers satisfying
\begin{align}
    0 &\leq a \leq a_i\\
    0 &\leq b \leq b_i,
\end{align}
where (see the proof of Proposition 4.4 in~\cite{stichtenoth2006transitive})
\begin{align}
    a_i&=2e_0^{(i)}-2\\
    b_i&= (r-1)e_\infty^{(i)}-2.
\end{align}
For our purposes, we will exclusively choose
\begin{align}
    a &= \left\lfloor \frac{a_i}{4}\right\rfloor\\
    b &= \left\lfloor \frac{b_i}{4}\right\rfloor
\end{align}
and consider the corresponding classical codes $C_{a,b}^{(i)}$. 

\begin{proposition}
    The following hold for $C^{(i)}_{a,b}$.
    \begin{align}\label{eq:param_assump_1}
        a\cdot\deg A^{(i)} + b\cdot\deg B^{(i)} &< N^{(i)}\\
        (a+2)\deg A^{(i)} + (b+2)\deg B^{(i)} &> \frac{2N^{(i)}}{r-1}+\frac{N^{(i)}}{r(r-1)}.\label{eq:param_assump_2}
    \end{align} 
\end{proposition}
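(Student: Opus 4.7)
The plan is to reduce both inequalities to explicit formulas for $\deg A^{(i)}$ and $\deg B^{(i)}$ in terms of $N^{(i)}$, $e_0^{(i)}$, and $e_\infty^{(i)}$, and then substitute the floor-bounded values of $a$ and $b$ to obtain clean expressions linear in $N^{(i)}$ whose coefficients depend only on $r$.

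The key input I need is
\[
\deg A^{(i)} = \frac{N^{(i)}}{(r-1)\, e_0^{(i)}}, \qquad \deg B^{(i)} = \frac{N^{(i)}}{(r-1)\, e_\infty^{(i)}}.
\]
Since $(w)^{E_i}$ is principal it has degree zero, giving the balance $e_0^{(i)} \deg A^{(i)} = e_\infty^{(i)} \deg B^{(i)}$. To pin down the absolute values I appeal to Theorem~1.7 of~\cite{stichtenoth2006transitive}, which describes the decomposition of $(w=0)$ and $(w=\infty)$ in $E_i$: each has uniform ramification (indices $e_0^{(i)}$ and $e_\infty^{(i)}$ respectively), with $A^{(i)}$ and $B^{(i)}$ the corresponding sums of extensions. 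Combining this with the fundamental identity $\sum_{P'|P} e(P'|P) f(P'|P) = [E_i : \mathbb{F}_q(w)] = N^{(i)}/(r-1)$, where $[\mathbb{F}_q(w):E_0] = r-1$ comes from $z = w^{r-1}$, yields the displayed formulas.

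With the degrees in hand, the rest is arithmetic. From $a = \lfloor (e_0^{(i)}-1)/2 \rfloor$ and $b = \lfloor ((r-1)e_\infty^{(i)}-2)/4 \rfloor$, I read off $a \leq e_0^{(i)}/2 - 1/2$, $b \leq (r-1)e_\infty^{(i)}/4 - 1/2$, $a+2 > e_0^{(i)}/2 + 1/2$, and $b+2 > (r-1)e_\infty^{(i)}/4 + 1/2$. Substituting the upper bounds, the ramification factors cancel cleanly, giving
\[
a \deg A^{(i)} + b \deg B^{(i)} \;<\; \frac{N^{(i)}}{2(r-1)} + \frac{N^{(i)}}{4} \;<\; N^{(i)},
\]
which establishes~\eqref{eq:param_assump_1} (the final inequality being trivial for any $r \geq 2$). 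The analogous substitution of the lower bounds gives $(a+2)\deg A^{(i)} + (b+2)\deg B^{(i)} > \tfrac{N^{(i)}}{2(r-1)} + \tfrac{N^{(i)}}{4}$, reducing~\eqref{eq:param_assump_2} to the purely numeric inequality $\tfrac{1}{2(r-1)} + \tfrac{1}{4} > \tfrac{2}{r-1} + \tfrac{1}{r(r-1)}$, which after clearing denominators is exactly $r^2 - 7r - 4 > 0$. This holds with margin $4$ at $r = 8$ and is precisely the source of the hypothesis $r \geq 8$ in the proposition.

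The main obstacle is the degree computation for $A^{(i)}$ and $B^{(i)}$: identifying the correct ramification structure of $(w=0)$ and $(w=\infty)$ in $E_i$ from Stichtenoth's results and combining it with the fundamental identity. Once this structural piece is in place, both inequalities follow from elementary arithmetic whose only nontrivial content is the small-constant condition $r^2 - 7r - 4 > 0$.
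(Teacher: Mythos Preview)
Your proof is correct and follows essentially the same route as the paper's: both derive $e_0^{(i)}\deg A^{(i)} = e_\infty^{(i)}\deg B^{(i)} = N^{(i)}/(r-1)$ from the principal divisor relation and the degree of the extension $E_i/\mathbb{F}_q(w)$, bound $a$ and $b$ via the floor definition, and reduce both inequalities to the common intermediate $\tfrac{N^{(i)}}{2(r-1)}+\tfrac{N^{(i)}}{4}$. Your explicit identification of the final numeric condition as $r^2-7r-4>0$ is exactly what the paper's ``one may check for $r\geq 8$'' amounts to.
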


\begin{proof}
    Starting with the first equation, we have
\begin{align}
    a\cdot \deg A^{(i)} + b\deg B^{(i)} &\leq \frac{a_i}{4}\deg A^{(i)}+\frac{b_i}{4}\deg B^{(i)}\\
    &<\frac{1}{2}e_0^{(i)}\deg A^{(i)}+\frac{(r-1)}{4}e_\infty^{(i)}\deg B^{(i)}.
\end{align}
However, referring to Section 3 of~\cite{stichtenoth2006transitive} (in which the divisor $B^{(i)}$ is denoted as $G_0$), we find
\begin{equation}
    e_\infty^{(i)}\deg B^{(i)} = [E_i:\mathbb{F}_q(w)],
\end{equation}
but then, using Equation~\eqref{eq:principal_divisor}, and the fact that principal divisors have zero degree, we have also
\begin{equation}
    e_0^{(i)}\deg A^{(i)} = [E_i:\mathbb{F}_q(w)].
\end{equation}
Then, using the fact that $\mathbb{F}_q(w)/E_0$ is an extension of degree $r-1$, we have that $[E_i:\mathbb{F}_q(w)] = \frac{N^{(i)}}{r-1}$. Therefore, we have
\begin{equation}
    a\cdot\deg A^{(i)} + b\deg\cdot B^{(i)} < N^{(i)}\left(\frac{1}{2(r-1)}+\frac{1}{4}\right) < N^{(i)},
\end{equation}
thus establishing Equation~\eqref{eq:param_assump_1}. Next, we verify that we have Equation~\eqref{eq:param_assump_2}. Indeed, we have
\begin{align}
    (a+2)\deg A^{(i)} + (b+2)\deg B^{(i)} &> \left(\frac{a_i}{4}+1\right)\deg A^{(i)} + \left(\frac{b_i}{4}+1\right)\deg B^{(i)}\\
    &=\left(\frac{e_0^{(i)}+1}{2}\right)\deg A^{(i)} + \left(\frac{(r-1)e_\infty^{(i)}+2}{4}\right)\deg B^{(i)}\\
    &>\frac{N^{(i)}}{2(r-1)}+\frac{N^{(i)}}{4}\\
    &>\frac{2N^{(i)}}{r-1}+\frac{N^{(i)}}{r(r-1)},
\end{align}
where going into the third line we have again used $e_0^{(i)}\deg A^{(i)} = e_\infty^{(i)}\deg B^{(i)} = \frac{N^{(i)}}{r-1}$, and one may check that the last line holds for every $r \geq 8$.
\end{proof}

In the knowledge of Equations~\eqref{eq:param_assump_1} and~\eqref{eq:param_assump_2}, we may deduce the parameters of the code $C_{a,b}^{(i)}$ using the standard estimates (see Section 2.2 of~\cite{stichtenoth2009algebraic}).

\begin{lemma}
    The dimension of $C^{(i)}_{a,b}$ is 
    \begin{align}
        m^{(i)} \coloneq (a+1)\deg A^{(i)}+(b+1)\deg B^{(i)}-\frac{N^{(i)}}{r-1},
    \end{align}
    and its distance satisfies
    \begin{align}
        d\left(C_{a,b}^{(i)}\right) &\geq N^{(i)} - a\deg A^{(i)} - b\deg B^{(i)}.
    \end{align}
    The distance of the dual code $C_{a,b}^{(i)\perp}$ satisfies
    \begin{align}
    d\left(C_{a,b}^{(i)\perp}\right) &\geq(a+2)\deg A^{(i)}+(b+2)\deg B^{(i)}-2\frac{N^{(i)}}{r-1}.\label{eq:dual_distance_estiamte}
    \end{align}
\end{lemma}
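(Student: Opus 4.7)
The plan is to apply standard Riemann--Roch--based estimates for multi-point algebraic geometry codes (see Section~2.2 of~\cite{stichtenoth2009algebraic}). I would set $G \coloneq aA^{(i)} + bB^{(i)}$, so that $\deg G = a\deg A^{(i)} + b\deg B^{(i)}$. Equation~\eqref{eq:param_assump_1} gives $\deg G < N^{(i)}$, which implies that the evaluation map $\mathcal{L}(G) \to \mathbb{F}_q^{N^{(i)}}$ is injective, and hence $\dim C_{a,b}^{(i)} = \ell(G)$. The further essential ingredient is the genus $g$ of $E_i$; by Stichtenoth's analysis (Section~3 of~\cite{stichtenoth2006transitive}) this satisfies $g - 1 = \frac{N^{(i)}}{r-1} - \deg A^{(i)} - \deg B^{(i)}$, a formula obtained by applying Riemann--Hurwitz to the Galois cover $E_i/\mathbb{F}_q(w)$ of degree $\frac{N^{(i)}}{r-1}$ together with the principal divisor identity~\eqref{eq:principal_divisor}. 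Equation~\eqref{eq:param_assump_2} then gives $2g - 2 < \deg G$, so Riemann--Roch yields $\ell(G) = \deg G - g + 1$; substituting the genus reproduces exactly $m^{(i)}$.

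For the distance of $C_{a,b}^{(i)}$, I would use that any nonzero $f \in \mathcal{L}(G)$ has at most $\deg G$ zeros, so its evaluation vector at the $N^{(i)}$ places summed in $D^{(i)}$ has weight at least $N^{(i)} - \deg G$, giving the stated bound. For the dual distance, I would invoke the standard identification $C_{\mathcal{L}}(D^{(i)}, G)^\perp = C_{\mathcal{L}}(D^{(i)}, D^{(i)} - G + (\omega))$ for an appropriate Weil differential $\omega$ (Proposition~2.2.10 of~\cite{stichtenoth2009algebraic}). The dual is then itself an evaluation code whose underlying divisor has degree $(2g - 2) + N^{(i)} - \deg G$, so the same weight-of-evaluations argument applied to the dual yields a distance of at least $\deg G - (2g - 2)$. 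Substituting the genus formula converts this to $(a+2)\deg A^{(i)} + (b+2)\deg B^{(i)} - 2\frac{N^{(i)}}{r-1}$, matching~\eqref{eq:dual_distance_estiamte}.

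The hard part will be the genus formula for $E_i$, which depends on a nontrivial ramification calculation in the tower; everything else is a direct application of Riemann--Roch and the standard duality theory for AG codes. Fortunately this calculation is already carried out in Stichtenoth~\cite{stichtenoth2006transitive}, so in writing up the proof I would simply cite it. With the genus in hand it also becomes transparent why Equation~\eqref{eq:param_assump_2} carries the extra $\frac{N^{(i)}}{r(r-1)}$ beyond the bare minimum $\frac{2N^{(i)}}{r-1}$: this slack is precisely what guarantees the strict inequality $\deg G > 2g - 2$ needed to obtain Riemann--Roch equality (and hence an exact dimension formula rather than a lower bound), and also keeps the resulting dual distance bound nontrivial.
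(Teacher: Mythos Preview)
Your proposal is correct and follows essentially the same approach as the paper: both invoke the standard Riemann--Roch estimates for AG codes from Section~2.2 of~\cite{stichtenoth2009algebraic}, use Equations~\eqref{eq:param_assump_1} and~\eqref{eq:param_assump_2} to verify the hypotheses $2g-2 < \deg G < N^{(i)}$, and cite the genus formula $g(E_i) = \frac{N^{(i)}}{r-1}+1-(\deg A^{(i)}+\deg B^{(i)})$ from Theorem~1.7 of~\cite{stichtenoth2006transitive} to convert $\deg G - g + 1$ and $\deg G - (2g-2)$ into the stated expressions. Your write-up is slightly more explicit about the mechanics (injectivity of evaluation, the differential description of the dual), but the argument is the same; one minor inaccuracy in your closing commentary is that the extra $\frac{N^{(i)}}{r(r-1)}$ in~\eqref{eq:param_assump_2} is not merely to ensure the strict inequality $\deg G > 2g-2$ (any positive slack would do that) but rather to force $d\bigl(C_{a,b}^{(i)\perp}\bigr) > k^{(i)}$, which the paper uses later to put the generator matrix into the form~\eqref{eq:AG_gen_mat_form}.
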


\begin{proof}
    Equation~\eqref{eq:param_assump_1} simply says that
\begin{equation}\label{eq:AG_code_constraint_1}
    \deg\left(aA^{(i)}+bB^{(i)}\right) < N^{(i)}.
\end{equation}
Additionally, Equation~\eqref{eq:param_assump_2} implies that
\begin{equation}\label{eq:AG_code_constraint_2}
    \deg\left(aA^{(i)}+bB^{(i)}\right) > 2g(E_i)-2,
\end{equation}
where $g(E_i)$ is the genus of the function field $E_i$, which we find is
\begin{equation}
    g(E_i) = \frac{N^{(i)}}{r-1}+1-(\deg A^{(i)}+\deg B^{(i)})
\end{equation}
from Theorem 1.7 of~\cite{stichtenoth2006transitive}. Equations~\eqref{eq:AG_code_constraint_1} and~\eqref{eq:AG_code_constraint_2} allow us to determine the dimension and distance of the code $C_{a,b}^{(i)}$ as
\begin{align}
    m^{(i)} \coloneq \dim_{\mathbb{F}_q}\left(C_{a,b}^{(i)}\right) &= a\deg A^{(i)}+b\deg B^{(i)}+1-g(E_i)\\
    &=(a+1)\deg A^{(i)}+(b+1)\deg B^{(i)}-\frac{N^{(i)}}{r-1}\\
    d\left(C_{a,b}^{(i)}\right) &\geq N^{(i)} - a\deg A^{(i)} - b\deg B^{(i)},
\end{align}
respectively. As for the distance of the dual code. Equations~\eqref{eq:AG_code_constraint_1} and~\eqref{eq:AG_code_constraint_2} allow us to do estimate it as
\begin{align}
    d\left(C_{a,b}^{(i)\perp}\right) &\geq a\deg A^{(i)}+b\deg B^{(i)} -2g(E_i)+2\\
    &=(a+2)\deg A^{(i)}+(b+2)\deg B^{(i)}-2\frac{N^{(i)}}{r-1}.
\end{align}
This completes our proof.
\end{proof}

\subsubsection{Defining the Quantum Code}

Let us consider the places in the support of the divisor $D^{(i)}$, which we recall were the $N^{(i)}$ (rational) places of $E_i$ lying over the place $(z=1)$ in $E_0$. Referring to Section 2 of~\cite{stichtenoth2006transitive}, in particular point (F10), we find that in the extension $F_0/\mathbb{F}_q(z)$, the place $(z=1)$ of $\mathbb{F}_q(z)$ splits completely, and the places of $F_0 = \mathbb{F}_q(x_0)$ lying over it are the (rational) places $(x_0 = \eta)$, where $\eta \in \mathbb{F}_q$ satisfies $\eta^r + \eta \neq 0$. These $\eta$ are exactly the elements $\eta \in \mathbb{F}_q\setminus \mathbb{F}_r$, of which there are $q-r = r(r-1)$ and, indeed, note that the degree of the extension $F_0/\mathbb{F}_q(z)$ is $r(r-1)$. We are going to (arbitrarily) fix some $\eta' \in \mathbb{F}_q\setminus\mathbb{F}_r$ and partition the $N^{(i)}$ places of $E_i$ lying over the place $(z=1)$ into those that also lie over $(x_0 = \eta')$, and those that lie over $(x_0 = \eta)$ for some $\eta \neq \eta'$. We denote the former set as
\begin{equation}
    \bbeta = \{\beta_1, \ldots, \beta_{k^{(i)}}\}
\end{equation}
and the latter set as
\begin{equation}
    \bal = \{\alpha_1, \ldots, \alpha_{n^{(i)}}\}.
\end{equation}
Since the degree of the extension $F_0/\mathbb{F}_q(z)$ is $r(r-1)$, we have that the number of (rational) places of $E_i$ that lie over $(z=1)$ that also lie over $(x_0 = \eta')$ is
\begin{equation}
    k^{(i)} = |\bbeta| = \frac{N^{(i)}}{r(r-1)},
\end{equation}
and the number that lie over $(z=1)$ but lie over $(x_0 = \eta)$ for some $\eta \neq \eta'$ is 
\begin{equation}
    n^{(i)} = |\bal| = N^{(i)} - k^{(i)}.
\end{equation}

With this, let us write a generator matrix for the code $C_{a,b}^{(i)}$. This will be an $m^{(i)} \times N^{(i)}$ matrix whose columns will correspond to the places in the support of $D^{(i)}$. Moreover, we may let the first $k^{(i)}$ columns of the generator matrix correspond to the places $\bbeta$, and the latter $n^{(i)}$ columns of the generator matrix correspond to the places $\bal$. By performing row operations (which is equivalent to changing basis for the code), the generator matrix may be written into the form
\begin{equation}\label{eq:AG_gen_mat_form}
    \tilde{G} = \begin{pmatrix}
        I_{k^{(i)}}& G_1\\
        0 & G_0
    \end{pmatrix},
\end{equation}
where $I_{k^{(i)}}$ is the $k^{(i)} \times k^{(i)}$ identity matrix. This is possible because, combining Equations~\eqref{eq:param_assump_2} and~\eqref{eq:dual_distance_estiamte}, we have that $d\left(C^{(i)\perp}_{a,b}\right) > k^{(i)}$, which tells us that any $k^{(i)}$ columns of any generator matrix for $C_{a,b}^{(i)}$ must be linearly independent. Each row of $\tilde{G}$ is denoted as some $\tilde{g}^j$ for $j \in \left[m^{(i)}\right]$ and these are, by definition, the evaluations of some function in $\mathcal{L}(aA^{(i)}+bB^{(i)})$ at the places in the support of $D^{(i)}$. We may abuse notation and also denote this function as $\tilde{g}^j$.\footnote{Note that this is justified because the functions in $\mathcal{L}(aA^{(i)}+bB^{(i)})$ are in one-to-one correspondence with the codewords of $C_{a,b}^{(i)}$. To be explicit, since we have Equation~\eqref{eq:AG_code_constraint_1}, the evaluation map defining the algebraic geometry code is an isomorphism. Therefore, by identifying a particular codeword of $C_{a,b}^{(i)}$, in this case a row of $\tilde{G}$, we identify a unique function in $\mathcal{L}(aA^{(i)}+bB^{(i)})$.} We then let
\begin{equation}
    G = \begin{pmatrix}
        G_1\\G_0
    \end{pmatrix}
\end{equation}
be the restriction of $\tilde{G}$ to the columns corresponding to $\bal$, and we will denote the rows of $G$ as $g^j$ for $j \in \left[m^{(i)}\right]$. We will then define a quantum code in our usual way (see the preliminary material of~\cite{he2025quantum}), which is,
\begin{equation}
    \mathcal{Q}_{\AG} \coloneq \CSS(X,\mathcal{G}_0;Z,\mathcal{G}^\perp),
\end{equation}
where $\mathcal{G}_0$ and $\mathcal{G}$ are the vector spaces generated by the rows of $G_0$ and $G$, respectively, over $\mathbb{F}_q$. To establish the parameters of the quantum code $\mathcal{Q}_{AG}$, we need to establish a minimal assumption, denoted Assumption 2.9 in~\cite{he2025quantum}, which we repeat here for convenience.
\begin{assumption}\label{assump:independence}
    We assume that $\mathcal{G}_0 \cap \mathcal{G}_1 = 0$, and that the rows of $G_1$ are linearly independent.
\end{assumption}
We establish this in our case via the following claim.

\begin{claim}\label{claim:AG_code_two_mult}
    There is some fixed vector of non-zero entries $\underline{u} \in (\mathbb{F}_q^*)^{N^{(i)}}$ such that, for any two codewords $c^{(1)}, c^{(2)} \in C_{a,b}^{(i)}$, we have
    \begin{equation}
        \sum_{k=1}^{N^{(i)}}\underline{u}_kc^{(1)}_kc^{(2)}_k = 0.
    \end{equation}
\end{claim}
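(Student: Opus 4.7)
The plan is to combine the multiplication property of algebraic geometry codes with the residue theorem for function fields. First, each codeword $c \in C_{a,b}^{(i)}$ is the evaluation of some function $f \in \mathcal{L}(aA^{(i)}+bB^{(i)})$ at the places in $\supp D^{(i)}$, so the componentwise product $c^{(1)} \star c^{(2)}$ of two codewords is the evaluation of $f_1 f_2 \in \mathcal{L}(2aA^{(i)}+2bB^{(i)})$. It therefore suffices to produce $\underline{u} \in (\mathbb{F}_q^*)^{N^{(i)}}$ such that
\[
    \sum_{k=1}^{N^{(i)}} u_k h(P_k) = 0 \qquad \text{for every } h \in \mathcal{L}(2aA^{(i)}+2bB^{(i)}).
\]

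To produce such a $\underline{u}$, I would apply the residue theorem $\sum_Q \mathrm{Res}_Q(h\omega) = 0$ (summing over all places of $E_i$) to a suitable Weil differential $\omega$ on $E_i$. A natural candidate is $\omega := dz/(z-1)$, pulled back from $E_0$: since $(z=1)$ splits completely in $E_i/E_0$ (by Theorem 1.7 of~\cite{stichtenoth2006transitive}), each place $P_k$ is unramified over $(z=1)$, so $\omega$ has a simple pole at each $P_k$ with residue exactly $1$. Moreover, $\omega$ is $\Gal(E_i/E_0)$-invariant, and Theorem~\ref{thm:Galois_transitivity} therefore guarantees that the residues of $\omega$ at the $P_k$'s are all equal (and in particular all nonzero). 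Provided that $h\omega$ is regular at every place $Q \notin \supp D^{(i)}$, the residue theorem then produces $\sum_k h(P_k) = 0$, establishing the claim with $\underline{u}$ the all-ones vector.

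The main obstacle is verifying the divisor inequality $(\omega) \geq 2aA^{(i)} + 2bB^{(i)} - D^{(i)}$ on $E_i$, which is what encodes regularity of $h\omega$ outside $\supp D^{(i)}$. Since $\omega$ is the pullback of an $E_0$-differential whose only poles are at $(z=1)$ and $(z=\infty)$, the only places outside $\supp D^{(i)}$ where $\omega$ can fail to be regular lie above $(z=\infty)$. Using $z = w^{r-1}$ together with Equation~\eqref{eq:principal_divisor}, these places are precisely those in $\supp B^{(i)}$, and at such a place $Q$ one has $v_Q(\omega) = -e_Q + d_Q$, where $e_Q$ and $d_Q$ are the ramification index and different exponent of $Q$ over $(z=\infty)$. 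Combining the ramification data in Theorem 1.7 of~\cite{stichtenoth2006transitive} with the bound $b \leq b_i/4 = ((r-1)e_\infty^{(i)}-2)/4$ should close the inequality at these places; a parallel (easier) check at the places above $(z=0)$ constituting $\supp A^{(i)}$, where $\omega$ has no $E_0$-pole, handles the remainder using $d_Q \geq 0$ and $a \leq a_i/4$. If the explicit ramification bookkeeping proves too intricate, a cleaner fallback is to invoke Riemann--Roch to assert the existence of a differential with the required pole structure, and then use the Galois-equivariance of $\omega$ together with Theorem~\ref{thm:Galois_transitivity} to propagate a single nonzero residue simultaneously to all of $\supp D^{(i)}$.
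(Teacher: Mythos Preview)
Your approach is sound in outline and is essentially what underlies the reference you would end up citing anyway, but it is a longer road than the paper takes. The paper's proof is two lines: it invokes Proposition~4.7 of~\cite{stichtenoth2006transitive}, which states that $C_{a,b}^{(i)\perp} = \underline{u}\cdot C_{a_i-a,b_i-b}^{(i)}$ for some fixed $\underline{u}\in(\mathbb{F}_q^*)^{N^{(i)}}$, and then observes that the choices $a=\lfloor a_i/4\rfloor$, $b=\lfloor b_i/4\rfloor$ give $a\le a_i-a$ and $b\le b_i-b$, whence $\underline{u}\cdot C_{a,b}^{(i)}\subseteq \underline{u}\cdot C_{a_i-a,b_i-b}^{(i)} = C_{a,b}^{(i)\perp}$. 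Your residue-theorem argument with $\omega=dz/(z-1)$ is precisely the machinery that sits inside the proof of that proposition; you are unpacking the black box rather than citing it. The advantage of your route is that it makes the constant $\underline{u}$ explicit (indeed all-ones), which is pleasant; the cost is the ramification bookkeeping you flag.

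On that bookkeeping: the check at $\supp A^{(i)}$ goes through easily as you say, but at $\supp B^{(i)}$ the inequality $v_Q(\omega)\ge 2b$ is not a triviality. There one has $v_Q(\omega)=-(r-1)e_\infty^{(i)}+d(Q\mid (z{=}\infty))$, and since $e_\infty^{(i)}=r^i2^{s(i)}$ is a power of the characteristic, the ramification is wild and the tame bound $d\ge e-1$ is far from sufficient. You genuinely need the exact different exponent from Theorem~1.7 of~\cite{stichtenoth2006transitive} to close this, so in the end you are leaning on the same source as the paper, just on a different theorem from it. Your fallback is also a bit loose: Riemann--Roch gives you a nonzero $\omega$ in $\Omega(2aA^{(i)}+2bB^{(i)}-D^{(i)})$, and Galois-invariance would force all residues at $\supp D^{(i)}$ to coincide, but you still need an argument that they are not all zero (equivalently, that some $\omega$ in the space has an honest pole at one of the $P_k$), which requires a further dimension comparison or, again, an explicit choice.
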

\begin{proof}
    Referring to Proposition 4.7 of~\cite{stichtenoth2006transitive}, we find that there is some fixed vector $\underline{u} \in (\mathbb{F}_q^*)^{N^{(i)}}$ such that
    \begin{equation}
        C_{a,b}^{(i)\perp} = \underline{u}\cdot C_{a_i-a,b_i-b}^{(i)},
    \end{equation}
    where the notation on the right-hand side denotes the code formed by multiplying every codeword in $C_{a_i-a,b_i-b}^{(i)}$ componentwise by the vector $\underline{u}$. Our particular choices of $a$ and $b$ imply that $a \leq a_i-a$ and $b \leq b_i-b$, and so we have that
    \begin{equation}
        \underline{u}\cdot C_{a,b}^{(i)} \subseteq \underline{u}\cdot C_{a_i-a,b_i-b}^{(i)} = C_{a,b}^{(i)\perp},
    \end{equation}
    and the result follows.
\end{proof}
Claim~\ref{claim:AG_code_two_mult} tells us that, for any $j_1, j_2 \in \left[m^{(i)}\right]$, we have
\begin{equation}
    \sum_{k=1}^{N^{(i)}}\underline{u}_k\tilde{g}^{j_1}_k\tilde{g}^{j_2}_k = 0
\end{equation}
and then, given the structure of $\tilde{G}$, we have the corresponding statement for the rows of $G$; for any $j_1, j_2 \in \left[m^{(i)}\right]$,
\begin{equation}\label{eq:AG_code_G_row_mult}
    \sum_{k=1}^{N^{(i)}}\underline{u}_kg_k^{j_1}g_k^{j_2} = \begin{cases}
        1&\text{ if } 1 \leq j_1 = j_2 \leq k^{(i)}\\
        0 &\text{ otherwise}
    \end{cases}
\end{equation}
(note that our field has characteristic two). Equation~\eqref{eq:AG_code_G_row_mult} can be straightforwardly used to establish Assumption~\ref{assump:independence}.

With this assumption established, we may bound the parameters of $\mathcal{Q}_{\AG}$. 

\begin{lemma}
    $\cQ_{\AG}$ has rate $\ge \frac{1}{r(r-1)}$ and relative distance $\ge \frac{1}{4}-\frac{3}{2(r-1)}-\frac{1}{r(r-1)} > 0$.
\end{lemma}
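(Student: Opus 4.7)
The plan is to leverage the fact that, by construction, $\mathcal{G}$ is the puncturing of $C_{a,b}^{(i)}$ at the coordinates indexed by $\bbeta$ and $\mathcal{G}_0$ is the corresponding shortening. With Assumption~\ref{assump:independence} just verified via Claim~\ref{claim:AG_code_two_mult}, the standard CSS construction from~\cite{he2025quantum} equips $\cQ_{\AG}$ with $n^{(i)} = N^{(i)} - k^{(i)}$ physical qudits and $\dim \mathcal{G} - \dim \mathcal{G}_0 = k^{(i)}$ logical qudits. Since $k^{(i)} = N^{(i)}/(r(r-1))$, the rate is $\frac{k^{(i)}}{n^{(i)}} = \frac{1}{r(r-1)-1} > \frac{1}{r(r-1)}$. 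It remains to lower-bound the $X$- and $Z$-distances of the CSS code, i.e., the minimum Hamming weights of vectors in $\mathcal{G} \setminus \mathcal{G}_0$ and $\mathcal{G}_0^\perp \setminus \mathcal{G}^\perp$, respectively.

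For the $X$-distance, the key observation is that once one verifies $d(C_{a,b}^{(i)}) > k^{(i)}$ a posteriori from the bound below, every $v \in \mathcal{G}$ has a unique lift to a codeword $c \in C_{a,b}^{(i)}$. The condition $v \notin \mathcal{G}_0$ precisely states that $c|_{\bbeta} \neq 0$, and since the weight of $c|_{\bbeta}$ is at most $|\bbeta| = k^{(i)}$, the weight of $v$ is at least $d(C_{a,b}^{(i)}) - k^{(i)}$. For the $Z$-distance I would invoke the standard duality identities saying that $\mathcal{G}^\perp$ is the shortening of $C_{a,b}^{(i)\perp}$ at $\bbeta$ and $\mathcal{G}_0^\perp$ is the puncturing of $C_{a,b}^{(i)\perp}$ at $\bbeta$, then run exactly the same lift argument on $C_{a,b}^{(i)\perp}$ to conclude that $Z$-distance $\geq d(C_{a,b}^{(i)\perp}) - k^{(i)}$.

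Finally I would substitute the already-established classical estimates. Using the identity $e_0^{(i)}\deg A^{(i)} = e_\infty^{(i)}\deg B^{(i)} = N^{(i)}/(r-1)$ together with $a = \lfloor a_i/4 \rfloor$ and $b = \lfloor b_i/4 \rfloor$, a direct calculation gives $d(C_{a,b}^{(i)}) \geq N^{(i)}\bigl(\tfrac{3}{4} - \tfrac{1}{2(r-1)}\bigr)$, while the dual-distance estimate~\eqref{eq:dual_distance_estiamte} yields $d(C_{a,b}^{(i)\perp}) > N^{(i)}\bigl(\tfrac{1}{4} - \tfrac{3}{2(r-1)}\bigr)$. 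Dividing through by $n^{(i)} \leq N^{(i)}$ and subtracting $k^{(i)}/N^{(i)} = 1/(r(r-1))$ shows that the $Z$-distance is the bottleneck and produces the claimed relative distance $\frac{1}{4} - \frac{3}{2(r-1)} - \frac{1}{r(r-1)}$, positive for $r \geq 8$. The only real obstacle is careful bookkeeping with the floor functions and keeping track of whether the denominator is $n^{(i)}$ or $N^{(i)}$; the conceptual content reduces entirely to the puncturing/shortening duality combined with the classical parameter estimates already in hand.
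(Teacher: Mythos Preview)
Your proposal is correct and follows essentially the same approach as the paper. The only difference is cosmetic: the paper invokes equation~(45) of~\cite{he2025quantum} to get $d(\mathcal{Q}_{\AG}) \geq \min\bigl(d(C_{a,b}^{(i)}),\,d(C_{a,b}^{(i)\perp})\bigr)-k^{(i)}$ directly, whereas you re-derive that inequality from scratch via the puncturing/shortening duality and the unique-lift argument; the subsequent numerical estimates (identifying the dual distance as the bottleneck and arriving at $\frac{1}{4}-\frac{3}{2(r-1)}-\frac{1}{r(r-1)}$) are identical in both.
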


\begin{proof}
    The length of $\mathcal{Q}_{AG}$ is $n^{(i)} = N^{(i)} - k^{(i)}$, and its dimension is $k^{(i)}$ and so, in particular, its rate is at least $\frac{k^{(i)}}{N^{(i)}} = \frac{1}{r(r-1)}$. To lower bound the distance, we invoke equation~(45) of~\cite{he2025quantum} to obtain
    \begin{equation}
        d(\mathcal{Q}_{\AG}) \geq \min\left(d(C_{a,b}^{(i)}),d(C_{a,b}^{(i)\perp})\right)-k^{(i)},
    \end{equation}
    and using the previous estimates and parameter choices, it may be shown that $d(C_{a,b}^{(i)})\geq d(C_{a,b}^{(i)\perp})$, and that
    \begin{equation}
        d(\mathcal{Q}_{\AG}) \geq  d(C_{a,b}^{(i)\perp})-k^{(i)} \geq N^{(i)}\left[\frac{1}{4}- \frac{3}{2(r-1)}-\frac{1}{r(r-1)}\right],
    \end{equation}
    so that we establish relative distance at least $\frac{1}{4}-\frac{3}{2(r-1)}-\frac{1}{r(r-1)}$, which is positive for our choice $r \geq 8$.
\end{proof}

Having established the parameters of our quantum code, in particular that it has non-zero rate and relative distance, over a fixed field $\mathbb{F}_q$, we will establish the transversal addressability of the $\CCZ_q$ gate in the next two sections.

\subsection{Intra-Block \texorpdfstring{$\mathsf{CCZ}$}{} Gates}

We consider any member of the above code family indexed by $i$, which we recall has $n^{(i)}$ physical qudits and $k^{(i)}$ logical qudits. Given $A,B,C \in \left[k^{(i)}\right]$ and $\gamma \in \FF_q$, we will define a collection of $\CCZ$ gates that, when applied to a physical code state, will execute the operation $\ol{\CCZ^\gamma[A,B,C]}$. Note that, since the physical    qudits of the above code are in one-to-one correspondence with the set $\bal$, we may identify the set of physical qudits with this set. For example, a $\CCZ^\gamma$ gate on three physical qudits may be unambiguously written as $\CCZ^\gamma[\alpha_{k_1}, \alpha_{k_2}, \alpha_{k_3}]$.

Recall that the places $\bbeta$, which label our logical qudits, were chosen to lie over the place $(x_0 = \eta')$ of the function field $F_0$. It is a standard fact in Galois theory (see, for example, Appendix A.12 of~\cite{stichtenoth2009algebraic}) that given a Galois field extension $L/K$, and a further field $N$ with $L \supseteq N \supseteq K$, the extension $L/N$ is Galois. Therefore, given that the extensions $E_i/E_0$ are Galois, we deduce that the extensions $E_i/F_0$ are all Galois as well. Therefore, appealing to Theorem~\ref{thm:Galois_transitivity}, given any $A,B \in \left[k^{(i)}\right]$, there exists $\varphi_{AB} \in \Gal(E_i/F_0)$ such that
\begin{equation}
    \varphi_{AB}(\beta_A) = \beta_B.
\end{equation}
With this, we may define the logical $\CCZ$ operation in terms of physical $\CCZ$ gates, as follows.

\begin{theorem} \label{thm:ag-intra-block}
    Given the fixed vector $\underline{u}$ of non-zero entries from Claim~\ref{claim:AG_code_two_mult}, we call the vector formed by its first $k^{(i)}$ entries $\underline{x}$, and the vector formed by its latter $n^{(i)}$ entries $\underline{y}$.
    Let $g^A$ denotes the $A$-th row of $G$. 
    For any $\ket{\psi} \in \qcode_\AG$, it holds that 
    \begin{align}
        \ol{\CCZ^\gamma[A,B,C]} \ket{\psi} = \prod_{k=1}^{n^{(i)}} \CCZ^{\gamma\; \underline{x}_A^{-1}\underline{y}_k\;g^A_k}[\alpha_k,\varphi_{AB}(\alpha_k),\varphi_{AC}(\alpha_k)] \ket{\psi}.
    \end{align}
\end{theorem}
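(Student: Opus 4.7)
The plan is to reduce the identity to a triple-product orthogonality of codewords in $C_{a,b}^{(i)}$. Each logical computational basis state $\overline{\ket{u}}$ is a uniform superposition of physical basis states indexed by the restriction to $\bal$ of codewords $c \in C_{a,b}^{(i)}$ satisfying $c_{\beta_j} = u_j$ for each $j \in [k^{(i)}]$. The prescribed circuit acts diagonally on each such basis state by a phase $(-1)^{\phi(c)}$ with
$$ \phi(c) \;=\; \sum_{k=1}^{n^{(i)}} \tr\!\bigl( \gamma\, \underline{x}_A^{-1}\, \underline{y}_k\, g^A_k\; c_{\alpha_k}\, c_{\varphi_{AB}(\alpha_k)}\, c_{\varphi_{AC}(\alpha_k)} \bigr). $$
The goal is therefore to show $\phi(c) = \tr(\gamma\, u_A u_B u_C)$ in $\mathbb{F}_2$ for every such $c$.

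The first step is to recognise the permuted evaluations as codewords. Because the divisor $aA^{(i)} + bB^{(i)}$ is $\Gal(E_i/E_0)$-invariant and $\Gal(E_i/F_0) \subseteq \Gal(E_i/E_0)$, the Riemann--Roch space $\mathcal{L}(aA^{(i)} + bB^{(i)})$ is preserved by $\varphi_{AB}$ and $\varphi_{AC}$. Combined with the evaluation identity $f(\sigma(P)) = \sigma^{-1}(f)(P)$ at rational places---which holds because these automorphisms fix $\mathbb{F}_q$---it follows that, writing $c$ as the evaluation vector of some $f \in \mathcal{L}(aA^{(i)} + bB^{(i)})$, the vectors on $\bbeta \cup \bal$ defined by $c^{(A)} := c$, $c^{(B)}_\star := c_{\varphi_{AB}(\star)}$, and $c^{(C)}_\star := c_{\varphi_{AC}(\star)}$ are codewords of $C_{a,b}^{(i)}$, corresponding to $f$, $\varphi_{AB}^{-1}(f)$, and $\varphi_{AC}^{-1}(f)$ respectively. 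Next, I extend $\phi(c)$ to a sum over all $N^{(i)}$ coordinates by appending the corresponding $\bbeta$-terms weighted by $\underline{x}_j$. Letting $\tilde g^A$ denote the $A$-th row of $\tilde G$, the property $\tilde g^A(\beta_j) = \delta_{jA}$ ensures that only $j = A$ contributes to the extension, and $\varphi_{AB}(\beta_A) = \beta_B$, $\varphi_{AC}(\beta_A) = \beta_C$, $c_{\beta_A} = u_A$ make this extra term equal $\tr(\gamma u_A u_B u_C)$.

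It remains to show that the extended sum vanishes, which is the main step and the expected obstacle. The parameter choice $a = \lfloor a_i/4 \rfloor$, $b = \lfloor b_i/4 \rfloor$ yields $4a \leq a_i$ and $4b \leq b_i$, so $C_{2a,2b}^{(i)} \subseteq C_{a_i - 2a, \, b_i - 2b}^{(i)}$. Since $\tilde g^A \cdot c^{(A)}$ and $c^{(B)} \cdot c^{(C)}$ both lie in $C_{2a, 2b}^{(i)}$, applying the duality $C_{2a,2b}^{(i)\perp} = \underline{u} \cdot C_{a_i - 2a, \, b_i - 2b}^{(i)}$---the analogue at parameters $(2a, 2b)$ of Proposition~4.7 of~\cite{stichtenoth2006transitive}, verified in the same manner as Claim~\ref{claim:AG_code_two_mult}---yields
$$ \sum_{k=1}^{N^{(i)}} \underline{u}_k\, \tilde g^A_k\, c^{(A)}_k\, c^{(B)}_k\, c^{(C)}_k \;=\; 0. $$
Taking the trace of $\gamma \underline{x}_A^{-1}$ times this identity gives $\phi(c) + \tr(\gamma u_A u_B u_C) = 0$ in $\mathbb{F}_2$, which in characteristic two is the desired equality. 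The tighter inequality $4a \leq a_i$ here (rather than the $3a \leq a_i$ already needed to construct the CSS code) is exactly what upgrades the pairwise duality of Claim~\ref{claim:AG_code_two_mult} to the three-codeword orthogonality used above, and is the only nontrivial algebraic input beyond the Galois transitivity already established.
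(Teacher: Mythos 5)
Your proposal is correct and follows essentially the same route as the paper's own proof: recognize the $\varphi$-permuted restrictions as evaluation vectors of $\varphi_{AB}^{-1}(f)$ and $\varphi_{AC}^{-1}(f)$ in $\mathcal{L}(aA^{(i)}+bB^{(i)})$ via $\Gal(E_i/E_0)$-invariance of the divisor, extend the $\bal$-sum to all $N^{(i)}$ coordinates, use $\tilde{g}^A(\beta_{A'})=\delta_{AA'}$ to identify the extra term with $\tr(\gamma u_A u_B u_C)$, and kill the extended sum with a quadruple-product orthogonality powered by $4a\le a_i$, $4b\le b_i$ and Stichtenoth's iso-orthogonality. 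The only cosmetic difference is that you group the four functions as two pairwise products in $C^{(i)}_{2a,2b}$ and invoke the duality at parameters $(2a,2b)$, whereas the paper's Claim~3.10 argues $\underline{u}\cdot(C^{(i)}_{a,b})^{*3}\subseteq\underline{u}\cdot C^{(i)}_{3a,3b}\subseteq\underline{u}\cdot C^{(i)}_{a_i-a,b_i-b}=(C^{(i)}_{a,b})^\perp$; both are the same bound dressed differently.
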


\begin{remark}
    We remark that the above action is well-defined, because given a place $\alpha_k$ and any $\varphi \in \Gal(E_i/F_0)$, we have that $\alpha_k$ lies over some place $(x_0 = \eta)$ of $F_0$ for $\eta \neq \eta'$. Recalling Lemma~\ref{lemma:place_group_action}, we have that $\varphi(\alpha_k)$ also lies over the place $(x_0 = \eta)$, and so in particular denotes another physical qudit.
    We encourage the reader to compare this theorem with Theorem~3.7 of~\cite{he2025quantum}, as the results and underlying ideas are similar.
\end{remark}
\begin{proof}
    Let us check the effect of applying $\prod_{k=1}^{n^{(i)}} \CCZ^{\gamma\; \underline{x}_A^{-1}\underline{y}_k\;g_k^A}[\alpha_k,\varphi_{AB}(\alpha_k),\varphi_{AC}(\alpha_k)]$ to a code state. By linearity, it suffices to check the effect of this gate on a logical computational basis state.

    To do so, we start by considering any $\mathbb{F}_q$-linear combination of the rows of $G$, $g^{(w)} = \sum_{j \in [m^{(i)}]}w_jg^j$. We may also consider the corresponding linear combination of rows of $\tilde{G}$, $\tilde{g}^{(w)} = \sum_{j \in [m^{(i)}]} w_j\tilde{g}^j$, which we recall corresponds to a (unique) element of the Riemann-Roch space $\mathcal{L}(aA^{(i)}+bB^{(i)})$, which we also denote as $\tilde{g}^{(w)}$. The state $\Ket{g^{(w)}}$ is simply the evaluations of the function $\tilde{g}^{(w)}$ at the places $\bal$.

    We have:
    \begin{align}
        & \prod_{k=1}^{n^{(i)}} \CCZ^{\gamma\; \underline{x}_A^{-1}\underline{y}_k\;g_k^A}[\alpha_k,\varphi_{AB} (\alpha_k),\varphi_{AC}(\alpha_k)] \ket{g^{(w)}} \\
        =~& \prod_{k=1}^{n^{(i)}} \exp\left( i\pi \tr\left( \gamma \;\underline{x}_A^{-1}\underline{y}_k\;g^A_k \cdot \tilde{g}^{(w)}(\alpha_k) \cdot \tilde{g}^{(w)}(\varphi_{AB}(\alpha_k)) \cdot \tilde{g}^{(w)}(\varphi_{AC}(\alpha_k)) \right) \right) \ket{g^{(w)}} \\
        =~& \exp \left( i\pi \tr \left( \gamma \cdot \underline{x}_A^{-1}\sum_{k=1}^{n^{(i)}} \underline{y}_k\;g^A_k \cdot \tilde{g}^{(w)}(\alpha_k) \cdot \tilde{g}^{(w)}(\varphi_{AB}(\alpha_k)) \cdot \tilde{g}^{(w)}(\varphi_{AC}(\alpha_k)) \right) \right) \ket{g^{(w)}}.\label{eq:mid_expression_interblock}
    \end{align}
    Note that $\tilde{g}^A, \tilde{g}^{(w)} \in \mathcal{L}(aA^{(i)}+bB^{(i)})$, and recall that $g^A_k$, which is the $k$-th element of the $A$-th row of $G$, is nothing more than the function $\tilde{g}^A$ evaluated at the place $\alpha_k$. Referring to Section 4 of~\cite{stichtenoth2006transitive}, we find that the divisors $A^{(i)}$ and $B^{(i)}$ are invariant under any $\sigma \in \Gal(E_i/E_0)$ (recall our discussion at the end of Section~\ref{sec:further_AG_prelims} for the meaning of invariance of a divisor under the action of an element of the Galois group). In particular, the divisor $aA^{(i)}+bB^{(i)}$ is invariant under the action of any element of the subgroup $\Gal(E_i/F_0)$ and so, given any $f \in \mathcal{L}(aA^{(i)}+bB^{(i)})$, we have $\sigma(f) \in \mathcal{L}(aA^{(i)}+bB^{(i)})$ for every $\sigma \in \Gal(E_i/F_0)$. Recalling Lemma~\ref{lemma:place_group_action}, we have that $\tilde{g}^{(w)}(\varphi_{AB}(\alpha_k))$, which is the function $\tilde{g}^{(w)}$ evaluated at the place $\varphi_{AB}(\alpha_k)$ is the same thing as the function $\varphi_{AB}^{-1}(\tilde{g}^{(w)})$ evaluated at the place $\alpha_k$. A similar comment applies to $\tilde{g}^{(w)}(\varphi_{AC}(\alpha_k))$; we may write this as
    \begin{align}
        \tilde{g}^{(w)}(\varphi_{AB}(\alpha_k)) &= \left(\varphi_{AB}^{-1}(\tilde{g}^{(w)})\right)(\alpha_k)\\
        \tilde{g}^{(w)}(\varphi_{AC}(\alpha_k)) &= \left(\varphi_{AC}^{-1}(\tilde{g}^{(w)})\right)(\alpha_k).
    \end{align}
    Moreover, we have that $\varphi_{AB}^{-1}(\tilde{g}^{(w)}), \varphi_{AB}^{-1}(\tilde{g}^{(w)}) \in \mathcal{L}(aA^{(i)}+bB^{(i)})$. From here, we will make use of the following claim.
    \begin{claim}\label{claim:AG_code_four_mult}
        Given any functions $f_1, f_2, f_3, f_4 \in \mathcal{L}(aA^{(i)}+bB^{(i)})$, we have that
        \begin{equation}
            \sum_{k=1}^{n^{(i)}}\underline{y}_kf_1(\alpha_k)f_2(\alpha_k)f_3(\alpha_k)f_4(\alpha_k) = \sum_{A'=1}^{k^{(i)}}\underline{x}_{A'}f_1(\beta_{A'})f_2(\beta_{A'})f_3(\beta_{A'})f_4(\beta_{A'}),
        \end{equation}
        where (as in the statement of Theorem~\ref{thm:ag-intra-block}, we have that $\underline{x}$ is formed from the first $k^{(i)}$ entries of the fixed all non-zeros vector $\underline{u} \in (\mathbb{F}_q^*)^{N^{(i)}}$ in Claim~\ref{claim:AG_code_two_mult}, and $\underline{y}$ is formed from its latter $n^{(i)}$ entries.
    \end{claim}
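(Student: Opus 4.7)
The plan is to recognise that Claim~\ref{claim:AG_code_four_mult} is a quadruple-product analogue of the two-codeword identity in Claim~\ref{claim:AG_code_two_mult}, and to derive it by combining the standard multiplicative behaviour of Riemann--Roch spaces with the iso-orthogonality property of the Stichtenoth codes used in that earlier claim. The key reason the factor of four is not a coincidence is our explicit parameter choice $a = \lfloor a_i/4\rfloor$ and $b = \lfloor b_i/4\rfloor$; this was rigged precisely so that four copies of $\mathcal{L}(aA^{(i)} + bB^{(i)})$ can still be absorbed into a code dual to which $\underline{u}$ belongs.

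First I would observe that if $f_1, f_2, f_3, f_4 \in \mathcal{L}(aA^{(i)} + bB^{(i)})$, then by the standard bound $(f_1f_2f_3f_4) = \sum_j (f_j)$ on divisors of products, the product $h := f_1 f_2 f_3 f_4$ lies in $\mathcal{L}(4aA^{(i)} + 4bB^{(i)})$. Moreover $A^{(i)}$ and $B^{(i)}$ are supported on places lying over $(w=0)$ and $(w=\infty)$, whereas the places in the support of $D^{(i)}$ lie over $(z=1)$, so $h$ has no poles at any of the evaluation points. Consequently, evaluating $h$ at the $N^{(i)}$ places in the support of $D^{(i)}$ yields a codeword of $C^{(i)}_{4a,4b}$.

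Next, I would invoke Proposition~4.7 of~\cite{stichtenoth2006transitive} in the form $C^{(i)\perp}_{4a,4b} = \underline{u}\cdot C^{(i)}_{a_i - 4a,\, b_i - 4b}$, using the \emph{same} fixed vector $\underline{u}$ that appears in Claim~\ref{claim:AG_code_two_mult}. Our parameter choice guarantees $a_i - 4a \geq 0$ and $b_i - 4b \geq 0$, so the divisor $(a_i - 4a)A^{(i)} + (b_i - 4b)B^{(i)}$ is effective. Therefore the constant function $1$ lies in the corresponding Riemann--Roch space, the all-ones vector belongs to $C^{(i)}_{a_i - 4a,\, b_i - 4b}$, and hence $\underline{u}$ itself lies in $C^{(i)\perp}_{4a,4b}$. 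Pairing this with the codeword from the previous paragraph gives
\begin{equation}
    \sum_{P \in \supp D^{(i)}} \underline{u}_P\, f_1(P)f_2(P)f_3(P)f_4(P) = 0.
\end{equation}

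Finally, I would split this sum according to the partition of $\supp D^{(i)}$ into $\bbeta$ and $\bal$, with the corresponding splitting $\underline{u} = (\underline{x}, \underline{y})$ from the statement of Theorem~\ref{thm:ag-intra-block}, to obtain
\begin{equation}
    \sum_{A'=1}^{k^{(i)}} \underline{x}_{A'} f_1(\beta_{A'})f_2(\beta_{A'})f_3(\beta_{A'})f_4(\beta_{A'}) + \sum_{k=1}^{n^{(i)}} \underline{y}_k f_1(\alpha_k)f_2(\alpha_k)f_3(\alpha_k)f_4(\alpha_k) = 0,
\end{equation}
and then use that $\mathbb{F}_q$ has characteristic two to move one side over and recover the claim. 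I do not expect any substantive obstacle here: the only thing to verify carefully is that our floor-based choice of $a,b$ actually yields $4a \leq a_i$ and $4b \leq b_i$, which is immediate, and that the dual relation from~\cite{stichtenoth2006transitive} is valid in this enlarged parameter regime, which is exactly the content of Proposition~4.7 there.
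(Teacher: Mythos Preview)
Your proposal is correct and follows essentially the same approach as the paper: both arguments combine the multiplicativity of Riemann--Roch spaces with Proposition~4.7 of~\cite{stichtenoth2006transitive} and the parameter choice $a=\lfloor a_i/4\rfloor$, $b=\lfloor b_i/4\rfloor$ to force the $\underline{u}$-weighted sum over all $N^{(i)}$ places to vanish, then split and use characteristic two. The only cosmetic difference is packaging---the paper phrases it as $\underline{u}\cdot C^{(i)}_{3a,3b}\subseteq (C^{(i)}_{a,b})^\perp$ and pairs with the fourth codeword, whereas you phrase it as $\underline{u}\in (C^{(i)}_{4a,4b})^\perp$ and pair with the fourfold product---but these are the same identity read two ways.
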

    \begin{proof}
        Given the functions $f_1, \ldots, f_4$, there are codewords in $C_{a,b}^{(i)}$ given by their evaluations at the places $\alpha_k$ and $\beta_A$. Turning again to Proposition 4.7 of~\cite{stichtenoth2006transitive}, we have that $(C_{a,b}^{(i)})^\perp = \underline{u}\cdot C_{a_i-a,b_i-b}^{(i)}$ for a fixed non-zero vector $\underline{u} \in (\mathbb{F}_q^*)^{N^{(i)}}$, which we also considered in Claim~\ref{claim:AG_code_two_mult}. We have
        \begin{equation}
            \underline{u}\cdot \left(C_{a,b}^{(i)}*C_{a,b}^{(i)}*C_{a,b}^{(i)}\right) \subseteq \underline{u}\cdot C_{3a,3b}^{(i)} \subseteq \underline{u}\cdot C_{a_i-a,b_i-b}^{(i)} = (C_{a,b}^{(i)})^\perp,
        \end{equation}
        where, given codes $C_1$ and $C_2$, $C_1*C_2$ denotes the set of vectors formed by componentwise multiplying any vector in $C_1$ with any vector in $C_2$. The first containment is true because, given any divisors $F$ and $G$, and functions $f \in \mathcal{L}(F)$ and $g \in \mathcal{L}(G)$, we have $fg \in \mathcal{L}(F+G)$. The second containment is true because we chose $a = \left\lfloor\frac{a_i}{4}\right\rfloor$ and $b = \left\lfloor \frac{b_i}{4}\right\rfloor$. It follows that
        \begin{equation}
            \sum_{A'=1}^{k^{(i)}}\underline{u}_{A'}f_1(\beta_{A'})f_2(\beta_{A'})f_3(\beta_{A'})f_4(\beta_{A'}) + \sum_{k=1}^{n^{(i)}}\underline{u}_{k+k^{(i)}}f_1(\alpha_k)f_2(\alpha_k)f_3(\alpha_k)f_4(\alpha_k)=0
        \end{equation}
        and the result follows.
    \end{proof}
    We consider the sum inside the exponent in Equation~\eqref{eq:mid_expression_interblock}. Using Claim~\ref{claim:AG_code_four_mult}, this is equal to
    \begin{align}
        &\sum_{k=1}^{n^{(i)}}\underline{y}_k\;g_k^A\cdot \tilde{g}^{(w)}(\alpha_k)\cdot \left(\varphi_{AB}^{-1}(\tilde{g}^{(w)})\right)(\alpha_k)\cdot \left(\varphi_{AC}^{-1}(\tilde{g}^{(w)})\right)(\alpha_k)\\ = &\sum_{A'=1}^{k^{(i)}}\underline{x}_{A'}\;\tilde{g}^A(\beta_{A'})\cdot \tilde{g}^{(w)}(\beta_{A'})\cdot\left(\varphi_{AB}^{-1}(\tilde{g}^{(w)})\right)(\beta_{A'})\cdot\left(\varphi_{AC}^{-1}(\tilde{g}^{(w)})\right)(\beta_{A'})\\
        =&\;\;\underline{x}_A\cdot \tilde{g}^{(w)}(\beta_A)\cdot\left(\varphi_{AB}^{-1}(\tilde{g}^{(w)})\right)(\beta_A)\cdot\left(\varphi_{AC}^{-1}(\tilde{g}^{(w)})\right)(\beta_A)\\
        =&\;\;\underline{x}_A\cdot \tilde{g}^{(w)}(\beta_A)\cdot \tilde{g}^{(w)}(\varphi_{AB}(\beta_A))\cdot \tilde{g}^{(w)}(\varphi_{AC}(\beta_A))\\
        =&\;\;\underline{x}_A\cdot \tilde{g}^{(w)}(\beta_A)\cdot\tilde{g}^{(w)}(\beta_B)\cdot\tilde{g}^{(w)}(\beta_C)\\
        =&\;\;\underline{x}_A\cdot w_Aw_Bw_C,
    \end{align}
    where going into the third line we have used that $\tilde{g}^A(\beta_{A'}) = \delta_{AA'}$ for $A,A' \leq k^{(i)}$ (recall the form of the generator matrix from Equation~\eqref{eq:AG_gen_mat_form}), and we have again used the form of this generator matrix going into the last line. We conclude that
    \begin{equation}\label{eq:AG_interblock_gw}
        \prod_{k=1}^{n^{(i)}}\CCZ^{\gamma\;\underline{x}_A^{-1}\underline{y}_k\;g_k^A}[\alpha_k,\varphi_{AB}(\alpha_k),\varphi_{AC}(\alpha_k)]\ket{g^{(w)}} = \exp\left(i\pi\tr\left(\gamma\; w_Aw_Bw_C\right)\right)\ket{g^{(w)}}.
    \end{equation}
    With this, we may consider any logical computational basis state. Indeed, given $w \in \mathbb{F}_q^k$, we have that
    \begin{equation}
        \ol{\ket{w}} \propto\sum_{g \in \mathcal{G}_0}\Ket{\sum_{A'=1}^{k^{(i)}}w_{A'}g^{A'}+g}
    \end{equation}
    where the normalisation is unimportant to us. By linearity, Equation~\eqref{eq:AG_interblock_gw} gives us that
    \begin{equation}
        \prod_{k=1}^{n^{(i)}}\CCZ^{\gamma\;\underline{x}_A^{-1}\underline{y}_k\;g_k^A}[\alpha_k,\varphi_{AB}(\alpha_k),\varphi_{AC}(\alpha_k)]\ol{\ket{w}} = \exp(i\pi\tr(\gamma \;w_Aw_Bw_C))\ol{\ket{w}},
    \end{equation}
    so that the gate on the left executes the logical gate $\ol{\CCZ^\gamma[A,B,C]}$, as required.
\end{proof}

\paragraph{Circuit Depth.} 
We have shown that we can apply a logical $\ol{\CCZ^\gamma[A,B,C]}$ gate by applying $n^{(i)}$ different $\CCZ$ gates on the physical qudits. We claim that one can arrange these $n^{(i)}$ different $\CCZ$ gates into a depth-$7$ circuit.

Note that for any physical qudit $\alpha_k$, there are exactly three physical $\CCZ$ gates that involve $\alpha_k$. This means that for any physical $\CCZ(\alpha_{k_1}, \alpha_{k_2}, \alpha_{k_3})$ gate, there are at most six other $\CCZ$ gates that involve any of $\alpha_{k_1}, \alpha_{k_2}$ or $\alpha_{k_3}$. 

We construct the seven layers of the circuit as follows. For the first layer, pick greedily as many of the $n^{(i)}$ $\CCZ$ gates that do not share any physical qudit. Then, all the remaining $\CCZ$ gates must share at least one physical qudit with one of the chosen gates in the first layer, otherwise they would have been included in the first layer. In particular, each remaining $\CCZ$ gate shares a physical qudit with at most five other remaining $\CCZ$ gates.

We repeat this process to select the $\CCZ$ gates in layer 2; we pick from the remaining $\CCZ$ gates as many as possible that do not share a physical qudit. At the end, all the remaining $\CCZ$ gates shares a physical qudit with at most four other remaining $\CCZ$ gates.

We continue this process four more times. After this sixth layer, all remaining $\CCZ$ gates will share no physical qudits with each other, and we can pick them all for our seventh layer.

\subsection{Inter-Block \texorpdfstring{$\mathsf{CCZ}$}{} Gates}

In the last section, we showed how to implement a logical $\CCZ$ gate on three qudits \emph{within} the same code state, meaning an intra-block $\CCZ$ gate. In this section, we will show how to implement a logical $\CCZ$ gate on any three logical qudits in three \emph{different} code blocks.

We will construct the logical operator that applies a $\CCZ$ gate between qudits $A$ in the first code state, $B$ in the second code state, and $C$ in the third code state. We will denote this logical operator by $\ol{\CCZ_{123}^\gamma[A, B, C]}$.  Likewise, a physical $\CCZ$ gate on qudits $\alpha_{k_1}$ in code state $1$, $\alpha_{k_2}$ in code state $2$, and $\alpha_{k_3}$ in code state $3$ will be denoted by $\CCZ_{123}[\alpha_{k_1}, \alpha_{k_2}, \alpha_{k_3}]$. Other notation will be carried over from the previous section, and the argument will be given more briefly as it is similar to the last section.

As we argued in the last section, given $A,B \in \left[k^{(i)}\right]$, there exists $\varphi_{AB} \in \Gal(E_i/F_0)$ such that
\begin{equation}
    \varphi_{AB}(\beta_A) = \beta_B
\end{equation}
by Theorem~\ref{thm:Galois_transitivity}, since the extension $E_i/F_0$ is Galois. Then, recalling the notation from Theorem~\ref{thm:ag-intra-block}, we have the following.

\begin{theorem} \label{thm:ag-inter-block}
    For any $\ket{\psi_1}, \ket{\psi_2}, \ket{\psi_3} \in \qcode_\AG$, it holds that 
    \begin{align}
        \ol{\CCZ_{123}^\gamma[A,B,C]} \ket{\psi_1} \ket{\psi_2} \ket{\psi_3} = \prod_{k=1}^{n^{(i)}} \CCZ_{123}^{\gamma \;\underline{x}_A^{-1}\underline{y}_k\;g^A_k}[\alpha_k,\varphi_{AB}(\alpha_k),\varphi_{AC}(\alpha_k)] \ket{\psi_1} \ket{\psi_2} \ket{\psi_3}.
    \end{align}
    Hence, we can define
    \begin{align}
        \ol{\CCZ_{123}^\gamma[A,B,C]} = \prod_{k=1}^{n^{(i)}} \CCZ_{123}^{\gamma \;\underline{x}_A^{-1}\underline{y}_k\;g^A_k}[\alpha_k,\varphi_{AB}(\alpha_k),\varphi_{AC}(\alpha_k)].
    \end{align}
\end{theorem}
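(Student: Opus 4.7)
The plan is to repeat the reasoning of Theorem~\ref{thm:ag-intra-block} essentially verbatim, using the simplification that the three logical qudits now live in three disjoint tensor factors, so the physical $\CCZ_{123}$ gates automatically act on non-overlapping registers and no additional scheduling is required.

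First I would use linearity in each of the three tensor factors to reduce to product logical computational basis states $\ket{g^{(w^{(1)})}}\otimes\ket{g^{(w^{(2)})}}\otimes\ket{g^{(w^{(3)})}}$, where each $g^{(w^{(j)})}$ is an $\mathbb{F}_q$-linear combination of rows of $G$ corresponding to a function $\tilde{g}^{(w^{(j)})}\in\mathcal{L}(aA^{(i)}+bB^{(i)})$. On such a state, the product of physical gates produces a single global phase $\exp(i\pi\tr(S))$ with
\begin{equation}
S \;=\; \gamma\,\underline{x}_A^{-1}\sum_{k=1}^{n^{(i)}} \underline{y}_k\,g^A_k\cdot \tilde{g}^{(w^{(1)})}(\alpha_k)\cdot \tilde{g}^{(w^{(2)})}(\varphi_{AB}(\alpha_k))\cdot \tilde{g}^{(w^{(3)})}(\varphi_{AC}(\alpha_k)).
\end{equation}
From here the manipulations are identical to the intra-block case: using Lemma~\ref{lemma:place_group_action} I pull the Galois automorphisms off the evaluation points into the functions, writing $\tilde{g}^{(w^{(2)})}(\varphi_{AB}(\alpha_k)) = \bigl(\varphi_{AB}^{-1}(\tilde{g}^{(w^{(2)})})\bigr)(\alpha_k)$ and similarly for $\tilde{g}^{(w^{(3)})}$, and I note via the $\Gal(E_i/F_0)$-invariance of $aA^{(i)}+bB^{(i)}$ that both pulled-back functions still lie in the same Riemann-Roch space.

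With four functions from $\mathcal{L}(aA^{(i)}+bB^{(i)})$ in hand, the next step is to invoke Claim~\ref{claim:AG_code_four_mult} to convert the $\underline{y}$-weighted sum over $\bal$ into the corresponding $\underline{x}$-weighted sum over $\bbeta$. The partially systematic form of the generator matrix in Equation~\eqref{eq:AG_gen_mat_form} then gives $\tilde{g}^A(\beta_{A'}) = \delta_{AA'}$, collapsing the $\bbeta$-sum to its $A'=A$ term. Pushing $\varphi_{AB}$ and $\varphi_{AC}$ back onto the evaluation points via Lemma~\ref{lemma:place_group_action} and applying $\varphi_{AB}(\beta_A)=\beta_B$, $\varphi_{AC}(\beta_A)=\beta_C$ together with the systematic form once more yields $S = \underline{x}_A^{-1}\cdot\underline{x}_A\cdot \gamma\, w^{(1)}_A w^{(2)}_B w^{(3)}_C = \gamma\, w^{(1)}_A w^{(2)}_B w^{(3)}_C$, which is exactly the phase defining $\ol{\CCZ_{123}^\gamma[A,B,C]}$ on this product basis state.

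I do not expect a genuine obstacle here. Every analytic input, namely the transitive Galois action on rational places, the invariance of $A^{(i)}$ and $B^{(i)}$ under $\Gal(E_i/E_0)$, and the four-fold star-product closure underlying Claim~\ref{claim:AG_code_four_mult}, has already been established in the intra-block setting. The one conceptual check is that the argument genuinely decouples across the three tensor factors: the intra-block proof effectively specialised to $w^{(1)}=w^{(2)}=w^{(3)}$, but the algebra never used that equality, so the inter-block case follows by the same chain of identities. Finally, the depth-$1$ claim is immediate, since the three slots of each physical $\CCZ_{123}$ live in three different code blocks; distinct physical $\CCZ_{123}$ gates can share a register label but never an actual qudit, so all $n^{(i)}$ gates may be executed in parallel.
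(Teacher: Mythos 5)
Your proposal follows essentially the same route as the paper's proof: reduce by linearity to states of the form $\ket{g^{(w)}}\ket{g^{(w')}}\ket{g^{(w'')}}$, pull the Galois automorphisms off the evaluation points and into the functions via Lemma~\ref{lemma:place_group_action}, apply Claim~\ref{claim:AG_code_four_mult} to move the $\underline{y}$-weighted sum over $\bal$ to an $\underline{x}$-weighted sum over $\bbeta$, and collapse using the partially systematic generator matrix and the transitivity relations $\varphi_{AB}(\beta_A)=\beta_B$, $\varphi_{AC}(\beta_A)=\beta_C$, arriving at the phase $\gamma\,w^{(1)}_A w^{(2)}_B w^{(3)}_C$; your observation that the intra-block algebra never used $w^{(1)}=w^{(2)}=w^{(3)}$ and your depth-$1$ reasoning likewise match the paper. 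One small terminological note: the states $\ket{g^{(w^{(j)})}}$ you reduce to are physical computational basis states in the code's $Z$-logical decomposition, not themselves logical computational basis states, but this does not affect the argument.
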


\begin{proof}
    This proof will follow analogously to the proof of Theorem~\ref{thm:ag-intra-block}.  We will check the effect of applying $\prod_{k=1}^{n^{(i)}} \CCZ_{123}^{\gamma \;\underline{x}_A^{-1}\underline{y}_k\;g^A_k}[\alpha_k,\varphi_{AB}(\alpha_k),\varphi_{AC}(\alpha_k)]$ on $\ket{g^{(w)}}\ket{g^{(w')}}\ket{g^{(w'')}}$ where 
    \begin{align}
        g^{(w)} &= \sum_{j \in \left[m^{(i)}\right]}w_jg^j\\
        g^{(w')} &= \sum_{j \in \left[m^{(i)}\right]}w_j'g^j\\
        g^{(w'')} &= \sum_{j \in \left[m^{(i)}\right]}w_j''g^j,
    \end{align}
    which will be enough to check its behaviour on $\ol{\ket{w}}\ol{\ket{w'}}\ol{\ket{w''}}$, for any $w,w',w'' \in \mathbb{F}_q^{k^{(i)}}$, by linearity, which is enough to check its behaviour on any three codestates, again by linearity. Let
    \begin{align}
        \tilde{g}^{(w)} &= \sum_{j \in \left[m^{(i)}\right]}w_j\tilde{g}^j\\
        \tilde{g}^{(w')} &= \sum_{j \in \left[m^{(i)}\right]}w_j'\tilde{g}^j\\
        \tilde{g}^{(w'')} &= \sum_{j \in \left[m^{(i)}\right]}w_j''\tilde{g}^j,
    \end{align}
    be the corresponding functions in $\mathcal{L}(aA^{(i)}+bB^{(i)})$.

    We have that 
    \begin{align}
        & \prod_{k=1}^{n^{(i)}} \CCZ_{123}^{\gamma\; \underline{x}_A^{-1}\underline{y}_k\;g^A_k}[\alpha_k,\varphi_{AB} (\alpha_k),\varphi_{AC}(\alpha_k)] \ket{g^{(w)}}\ket{g^{(w')}}\ket{g^{(w'')}}\\
        =~& \prod_{k=1}^{n^{(i)}} \exp\left( i\pi \tr\left( \gamma\;\underline{x}_A^{-1}\underline{y}_k\; g^A_k \cdot \tilde{g}^{(w)}(\alpha_k) \cdot \tilde{g}^{(w')}(\varphi_{AB}(\alpha_k)) \cdot \tilde{g}^{(w'')}(\varphi_{AC}(\alpha_k)) \right) \right) \ket{g^{(w)}}\ket{g^{(w')}}\ket{g^{(w'')}} \\
        =~& \exp \left( i\pi \tr \left( \gamma\; \underline{x}_A^{-1}\cdot \sum_{k=1}^{n^{(i)}}\underline{y}_k\; g^A_k \cdot \tilde{g}^{(w)}(\alpha_k) \cdot \tilde{g}^{(w')}(\varphi_{AB}(\alpha_k)) \cdot \tilde{g}^{(w'')}(\varphi_{AC}(\alpha_k)) \right) \right) \ket{g^{(w)}}\ket{g^{(w')}}\ket{g^{(w'')}}.
    \end{align}
    The sum may be treated in essentially the same way as in the proof of Theorem~\ref{thm:ag-intra-block}. Indeed, we have that $\tilde{g}^A, \tilde{g}^{(w)}, \varphi_{AB}^{-1}(\tilde{g}^{(w')})$ and $\varphi_{AC}^{-1}(\tilde{g}^{(w'')})$ are all elements of $\mathcal{L}(aA^{(i)}+bB^{(i)})$, and so Claim~\ref{claim:AG_code_four_mult} applies, and using essentially the same manipulations we obtain
    \begin{align}
        &\sum_{k=1}^{n^{(i)}} \underline{y}_k\;g^A_k \cdot \tilde{g}^{(w)}(\alpha_k) \cdot \tilde{g}^{(w')}(\varphi_{AB}(\alpha_k)) \cdot \tilde{g}^{(w'')}(\varphi_{AC}(\alpha_k)) \\
        =\;&\sum_{k=1}^{n^{(i)}} \underline{y}_k\;g^A_k \cdot \tilde{g}^{(w)}(\alpha_k) \cdot \left(\varphi_{AB}^{-1}(\tilde{g}^{(w')})\right)(\alpha_k) \cdot \left(\varphi_{AC}^{-1}(\tilde{g}^{(w'')})\right)(\alpha_k) \\
        =\;&\sum_{A'=1}^{k^{(i)}} \underline{x}_{A'}\;\tilde{g}^A(\beta_{A'}) \cdot \tilde{g}^{(w)}(\beta_{A'}) \cdot \left(\varphi_{AB}^{-1}(\tilde{g}^{(w')})\right)(\beta_{A'}) \cdot \left(\varphi_{AC}^{-1}(\tilde{g}^{(w'')})\right)(\beta_{A'}) \\
        =\; &\underline{x}_A\cdot \tilde{g}^{(w)}(\beta_A)\cdot \tilde{g}^{(w')}(\varphi_{AB}(\beta_A))\cdot \tilde{g}^{(w'')}(\varphi_{AC}(\beta_A))\\
        =\; &\underline{x}_A\cdot \tilde{g}^{(w)}(\beta_A)\cdot \tilde{g}^{(w')}(\beta_B)\cdot \tilde{g}^{(w'')}(\beta_C)\\
        =\; &\underline{x}_A\cdot w_Aw_B'w_C'',
    \end{align}
    so 
    \begin{align}
        \prod_{k=1}^{n^{(i)}} \CCZ_{123}^{\gamma\; \underline{x}_A^{-1}\underline{y}_k\;g^A_k}[\alpha_k,\varphi_{AB}(\alpha_k),\varphi_{AC}(\alpha_k)] \ket{g^{(w)}}\ket{g^{(w')}}\ket{g^{(w'')}}
        &= \exp({i\pi\tr(\gamma\; w_Aw_B'w_C'')}) \ket{g^{(w)}}\ket{g^{(w')}}\ket{g^{(w'')}},
    \end{align}
    as required.
\end{proof}

\paragraph{Circuit Depth.}
The gate 
\begin{align}
    \ol{\CCZ_{123}^\gamma[A,B,C]} = \prod_{k=1}^{n^{(i)}} \CCZ_{123}^{\gamma\; \underline{x}_A^{-1}\underline{y}_k\;g^A_k}[\alpha_k,\varphi_{AB}(\alpha_k),\varphi_{AC}(\alpha_k)]
\end{align}
can be implemented in depth $1$, since every physical qudit appears in exactly one physical $\CCZ$ gate.


\ifsub
\else

\section{Acknowledgements.}

We are grateful to Venkatesan Guruswami for general conversations on algebraic geometry codes and to Swastik Kopparty for pointing us to transitive algebraic geometry codes.
\fi 

\bibliographystyle{alpha}
\bibliography{main}

\appendix

\end{document}